\pgfplotsset{compat=newest}
\tikzset{external/force remake}
\newtheorem{thm}{Theorem} 
\newtheorem{lem}{Lemma} 
\newenvironment{proof}{\textit{Proof.}}{~\hfill\rule{0.75em}{0.75em}\\}
\newcommand{\ul}[1]{\underline{#1}}
\newcommand{\ol}[1]{\overline{#1}}
\newcommand{\myvec}[1]{\text{vec}_{#1}}
\newcommand{\R}{\mathbb{R}}
\newcommand{\C}{\mathbb{C}}
\newcommand{\cH}{\mathcal{H}}
\newcommand{\cP}{\mathcal{P}}
\newcommand{\cS}{\mathcal{S}}
\newcommand{\vx}{\mathbf{x}}
\newcommand{\vw}{\mathbf{w}}
\newcommand{\vtw}{\widetilde{\vw}}
\newcommand{\vK}{\mathbf{K}}
\newcommand{\vtK}{\widetilde{\vK}}
\newcommand{\vy}{\mathbf{y}}
\newcommand{\vP}{\mathbf{P}}
\newcommand{\tG}{\widetilde{G}}
\newcommand{\vomega}{\bm{\omega}}
\newcommand{\vQ}{\mathbf{Q}}
\newcommand{\vV}{\mathbf{V}}
\newcommand{\vtheta}{\bm{\theta}}
\newcommand{\vR}{\mathbf{R}}
\newtheorem{rem}{Remark} 
\newcommand{\Hinf}{\cH_{\infty}}
\newcommand{\Htwo}{\cH_2}
\newcommand{\DefinedAs}[0]{\mathrel{\mathop:}=}
\DeclareMathOperator*{\argmin}{arg\,min}
\newcommand{\bigSigma}{\overline{\sigma}}
\newcommand{\jw}{j\omega}
\newcommand{\kit}{{(k)}}
\newcommand{\kitprev}{{(k-1)}}
\newcommand{\blkdiag}[1]{\text{bd}_{#1}}
\newcommand{\HinfNorm}[1]{\left\| #1 \right\|_\infty}
\newlength\fheight
\newlength\fwidth
\newlength\fheightTwo
\newif\ifcommenttorolf
\begin{document}
	%
	\title{Scalable and Data Privacy Conserving\\ Controller Tuning for Large-Scale Power Networks}
	%
	%
	%
	
	\author{Amer~Me{\v s}anovi{\'c},
		Ulrich~M{\"u}nz,~\IEEEmembership{Member,~IEEE,}
		and~Rolf~Findeisen,~\IEEEmembership{Member,~IEEE,}
		\thanks{RF (rolf.findeisen@ovgu.de) and AM (amer.mesanovic@siemens.com) are  with the Laboratory for Systems Theory and Automatic Control, Otto-von-Guericke-University Magdeburg, Germany, AM is also with the Siemens AG, Munich, UM (ulrich.muenz@siemens.com) is with Siemens Corp., Princeton.}
		\thanks{This work has been partially funded by the German Federal Ministry of Education and Research (BMBF) under Grant number 01S18066B in the frame of the AlgoRes project.}}
	
	%
	%

	\maketitle
	
	\begin{abstract}
		The increasing share of renewable generation leads to new challenges in reliable power system operation, 
		such as the rising volatility of power generation, which leads to time-varying dynamics and behavior of the system.
		To counteract the changing dynamics, we propose to adapt the parameters of existing controllers to the changing conditions. 
		Doing so, however, is challenging, as large power systems often involve multiple subsystem operators, which, for safety and privacy reasons, do not want to exchange detailed information about their subsystems. 
		Furthermore, centralized tuning of structured controllers for large-scale systems, such as power networks, is often computationally very challenging.
		For this reason, we present a hierarchical decentralized approach for controller tuning, which increases data security and scalability. 
		The proposed method is based on the exchange of structured reduced models of subsystems, which conserves data privacy and reduces computational complexity. For this purpose, suitable methods for model reduction and model matching are introduced.
		Furthermore, we demonstrate how increased renewable penetration leads to time-varying dynamics on the IEEE 68 bus power system, which underlines the importance of the problem. 
		Then, we apply the proposed approach on simulation studies to show its effectiveness.
		As shown, a similar system performance as with a centralized method can be obtained.
		Finally, we show the scalability of the approach on a large power system with more than 2500 states and about 1500 controller parameters.
	\end{abstract}
	
	\begin{IEEEkeywords}
		power system, hierarchical optimization, large-scale systems, structured controller synthesis, H-infinity design, linear matrix inequalities, power oscillation damping, data security and integrity
	\end{IEEEkeywords}

	%
	\IEEEpeerreviewmaketitle

	
	\allowdisplaybreaks

	\section{Introduction}
	
	Interconnecting electric power transmission systems has many widely recognized advantages, compared to isolated operation of smaller power systems.
	These include increased system resiliency, i.e. the ability of the system to withstand larger disturbances, cheaper operation, and reduced frequency variations~\cite{kundur93a}. 
	Over the last decades, smaller power systems have been intensively coupled. This has lead to large systems such as the European grid or the Western Interconnection in the USA. Such systems often consist of hundreds of interconnected subystems, managed by different subsystem operators.
	Each subsystem can consist of many different components, such as power plants, wind turbines, households and charging stations, which are interconnected by a power grid, c.f. Fig.~\ref{fig.GridModel}. 
	
	Operating such large systems of different components from different vendors is challenging. 
	Over the decades, a complex decentralized automation system was developed, based on years of practical experience and operation, which enables reliable operation of power systems. Power system components, spanning from inverters to large power plants, are typically operated using, e.g., PID controllers, notch filters, and lead-lag filters~\cite{kundur93a}. This automation system, however, needs to be properly tuned for reliable operation, and adjusted if new components are added or the dynamics of the systems change. 
	Currently, the tuning is triggered and performed manually, typically during the initial setup of new components. The controllers are usually not re-parameterized until large problems in the system make it necessary. 
	Such manual tuning has proven to be sufficient, as the operating power plants had so far not changed significantly over time. For example, oscillatory modes of todays European power system are well known and have quasi-constant frequency and damping~\cite{Grebe10}.
	However, the practice of never touching a running system, as long as it operates sufficiently well, is challenged by renewable power generation.
	
	As the share of renewable generation in large power systems continues to increase, the operation of power systems becomes increasingly challenging~\cite{Ren21Report2018}. For example, the constantly shifting mix of renewable and conventional generation leads to time-varying oscillatory modes~\cite{AlAli14,Preece14}, which are difficult to suppress. If not handled appropriately, todays controllers, which are typically tuned for fixed oscillatory modes, can become ineffective, increasing the risk of blackouts. Thus, new control and optimization methods are necessary in order to improve the robustness of power networks and to account for the changing dynamics. 
	These methods need to be able to cope with increasingly large systems, as power systems can have thousands of components. Additionally, the employed methods need to consider aspects of data security as subsystem operators are typically not willing or cannot fully disclose data about their subsystem with other operators, due to safety, privacy, and economic reasons.
	
	We propose to retune the parameters of the already present controllers in the power systems, to account for the seemingly changing operating conditions. 
	To do so, we present an approach for hierarchical-decentralized structured $\Hinf$ controller synthesis, which is able to optimize the parameters of the existing controllers to the temporally changing operating conditions in the system.
	
	Controller synthesis for power systems often exploits $\Hinf$ or $\Htwo$ optimization, and pole placement, c.f.~\cite{Raoufat16,Pipelzadeh17,Zhu03,befekadu2005robust,MahmoudiNAPS15,Preece13,wu2016input,Schuler14}.
	We note that other control and controller tuning approaches, such as sensitivity analysis~\cite{Marinescu09,Rouco01,Borsche15}, sliding mode controller design~\cite{Liao17}, model predictive control~\cite{Fuchs14}, time-discretization~\cite{lei2001optimization} etc. have been proposed, see, e.g.,~\cite{obaid2017power}, for an overview of different methods for power oscillation damping.
	Very few results, however, consider the optimization of existing controllers~\cite{befekadu2005robust,Marinescu09,kammer2017decentralized}. These approaches require either manual adaptation for each power system~\cite{Marinescu09}, or assume specific dependencies on the controller parameters~\cite{befekadu2005robust,kammer2017decentralized}.
	
	The use of $\Hinf$ optimization methods for controller synthesis has received significant attention in the last decades. 
	We consider the so-called $\Hinf$ structured controller synthesis~\cite{Scherer13,apkarian2018structured}. In structured $\Hinf$ controller synthesis, the controller structure is fixed and only the parameters of the controllers are tuned. Many works are based on the Bounded-real Lemma, e.g.~\cite{Schuler14, Hassibi1999,ibaraki2001rank,dinh2012combining,Han04,Karimi07,schuler2011design,befekadu2006robust,Mesanovic18ACC}.
	Alternative approaches include non-smooth optimization~\cite{HIFOO,apkarian2006nonsmooth} and bisectioning~\cite{kanev2004robust}. For an overview of various optimization methods, see~\cite{sadabadi2016static} and references therein. 
	In recent years, the focus in structured $\Hinf$ optimization shifted towards more efficient methods for finding local minima in structured controller synthesis, as local solutions are often sufficient. These methods often exploit frequency sampling, which makes the synthesis procedure faster~\cite{kammer2017decentralized,apkarian2018structured,boyd2016mimo,mesanovic2018optimalparameter}.
	
	In contrast to previous works, we propose an approach for hierarchical structured $\Hinf$ controller synthesis, tailored towards power systems. 
	Note that, although many works consider the synthesis of distributed controllers, e.g.~\cite{kussner2001damping,schuler2011design}, the synthesis procedure is typically centralized.
	For example, our previous work considered centralized structured controller synthesis~\cite{Mesanovic18ACC, mesanovic2018optimalparameter, Mesanovic17ISGT,mesanovic18ACDC}, whereas here we consider a hierarchical approach. To the authors' knowledge, there are currently no results that consider distributed or hierarchical structured $\Hinf$ synthesis.
	
	The remainder of this work is organized as follows: 
	Section~\ref{sec.PSModel} derives suitable models and formulates the structured $\Hinf$ controller synthesis problem for power systems. Centralized structured $\Hinf$ controller synthesis is briefly reviewed in Section~\ref{sec.CentTuning} before the proposed hierarchical approach is introduced in Section~\ref{sec.DistTuning}. 
	Section~\ref{sec.IEEE68bus} demonstrates the challenge of time-varying eigenmodes on the IEEE 68 bus system and shows the applicability of the proposed approach towards this challenging problem.
	In Section~\ref{sec.ApplicationLarge} the proposed method is applied to a large power system. Finally, conclusions are provided in Section~\ref{sec.Conclusion}.

	\subsection{Mathematical preliminaries}
	\label{subsec.MathPrelim}
	
	We use $\bigSigma(\cdot)$ to denote the largest singular value of a matrix, 
	and $(\cdot)^*$  to denote the conjugate transpose of a matrix. The notation $\succ$ ($\succeq$), and $\prec$ ($\preceq$) is used to denote positive (semi)definiteness and negative (semi)definiteness of  a matrix, respectively. We use $j$ to denote the imaginary unit, $\R_{\ge 0}$ denotes the set of non-negative real numbers, $\C$ denotes the set of complex numbers, and $\C_{>0}$ denotes the set of complex numbers with a positive real part. 
	
	Vectors are denoted with boldface symbols. We use the notation $\myvec{i}(\vx_i)$ to denote the vector obtained by stacking the vectors $\vx_i$ for all $i$, and $\blkdiag{i}(A_i)$ to construct a block-diagonal matrix consisting of matrices $A_i$, for all $i$. The operators $<$ ($\le$), $>$ ($\ge$), and $\left| \cdot \right|$ are defined element-wise for vectors.
	
	\section{Outline and formulation of the controller tuning problem for large power systems}
	\label{sec.PSModel}
	\begin{figure}[tb]
		\centering
		\includegraphics[width=1\columnwidth]{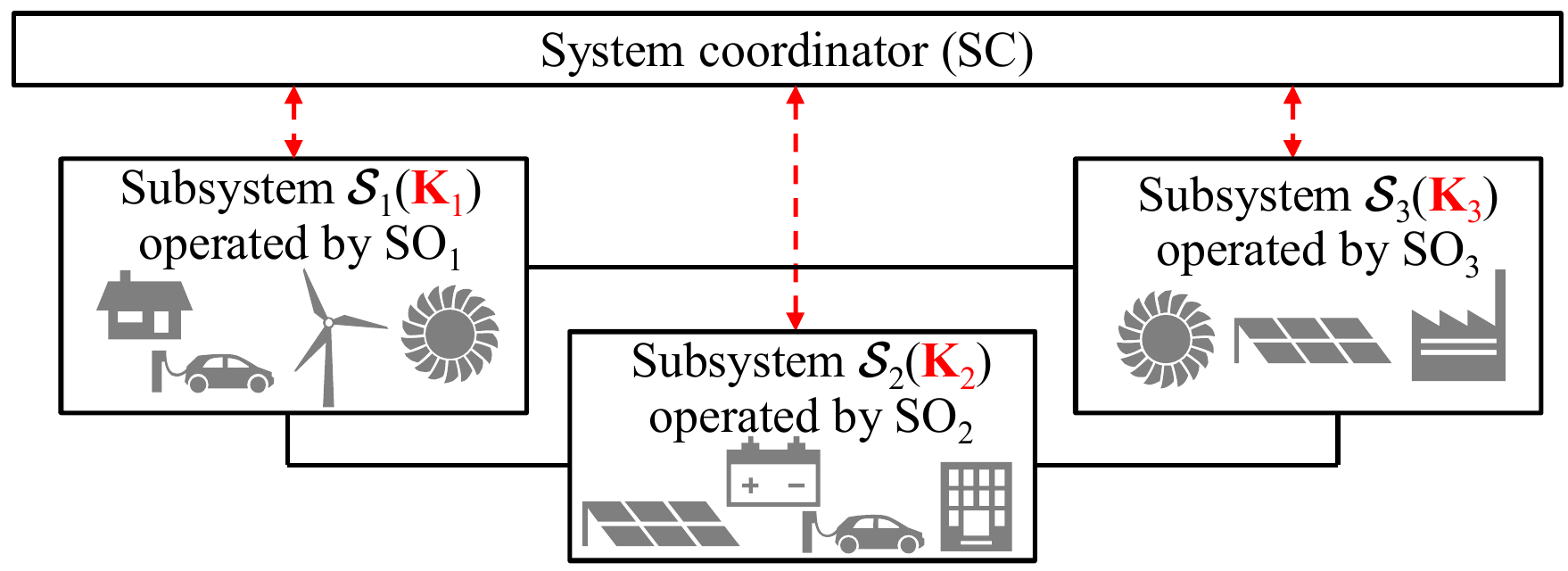}
		\caption{Simple power system example consisting of three interconnected subsystems belonging to different subsystem operators (SOs). Physical connections, i.e. power lines, are denoted with solid lines, whereas (slow) communication links are denoted with dashed lines.}
		\label{fig.GridModel}
	\end{figure}
	
	Figure~\ref{fig.GridModel} shows an exemplary power system, consisting of three subsystems $\cS_1$ - $\cS_3$. 
	The subystems are owned by the subsystems operators (SOs).
	Each subsystem $\cS_i$ contains controllers with tunable parameters, which are collected in parameter vectors $\vK_i$, marked red in Fig.~\ref{fig.GridModel}. The SOs communicate with the system coordinator (SC).
	For example, in Europe, the European Network of Transmission System Operators for Electricity (ENTSOE) could be the system coordinator.
	
	Depending on the infeed of renewable generation and load, the system dynamic behavior changes. In the proposed approach, if the SC or SOs notice that the resiliency of the system decreases, they start the tuning procedure for the controller parameters in the subsystems $\vK_i$. In the proposed hierarchical parameter tuning concept, the SOs only exchange information with the SC, which does not reveal detailed information about each subsystem. The reparameterization process is depicted with red dashed lines in Fig.~\ref{fig.GridModel}. Thereby, only slow communication is needed and limited information exchange. Before explaining the hierarchical tuning procedure in depth in Section~\ref{sec.DistTuning}, we outline the solution approach, as well as the structure, the used dynamic models, and problems of power system control. 
	
	
	\subsection{Outline and basic idea}
	\label{subsec.ApproachOutline}
	\begin{figure}[tb]
		\centering
		\includegraphics[width=0.9\columnwidth]{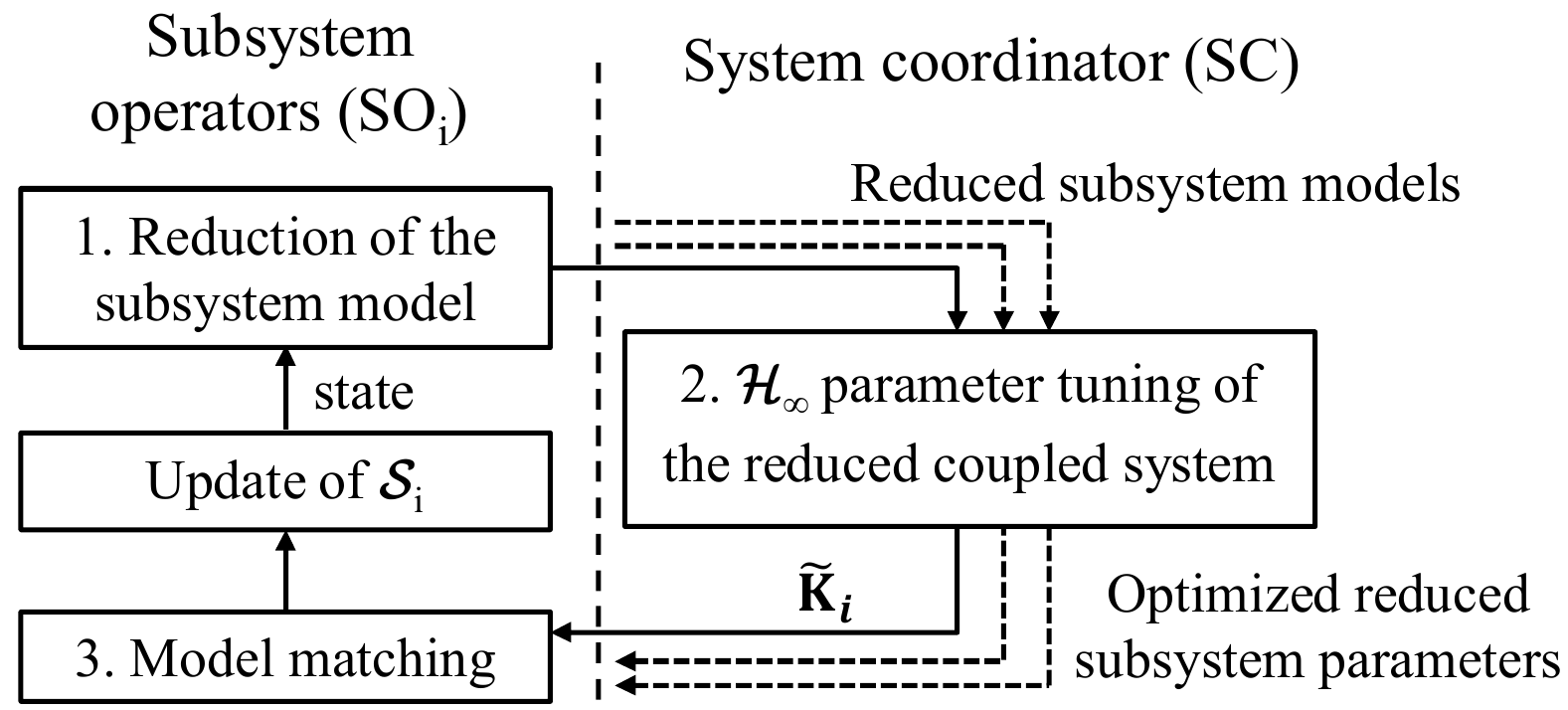}
		\caption{Outline of the proposed hierarchical tuning approach.}
		\label{fig.BasicOutline}
	\end{figure}
	
	In order to improve the readability of subsequent sections, we outline the main idea of the proposed approach, without mathematical details. The complete approach is presented in Section~\ref{sec.DistTuning}.
	
	We consider systems consisting of a number of subsystems $\cS_i$, c.f. Fig.~\ref{fig.GridModel}. Each $\cS_i$ is operated by an SO$_i$, who might not be willing to exchange detailed information about its subsystem with others. The subsystems are coordinated through a system coordinator (SC).
	Figure~\ref{fig.BasicOutline} outlines the proposed iterative controller tuning approach, consisting of three steps. In the first step, each SO reduces its subsystem model, hiding thereby the detailed dynamics and parameters inside the reduced model, leading to increased information security. The reduced model typically has a reduced number of controller parameters for tuning.
	The reduced subsystem models are sent to the SC. The SC combines the reduced subsystem models and tunes the parameters of the resulting reduced overall system. This allows to reduce the computational complexity of the parameter tuning process. The SC sends the optimized reduced parameters of the subsystems back to the SCs. They optimize the parameters of the detailed subsystems to match the reduced models as good as possible, concluding one iteration of the approach. The process is repeated until a satisfactory performance is achieved.
	
	In summary, the increase of data security and reduction of computational complexity is achieved by model reduction in the proposed approach. This, however, leads to a series of challenges which are addressed in more detail in Section~\ref{sec.DistTuning}.
	
	\subsection{Subsystem modeling}
	\label{Subsec.Model}	
	\begin{figure}[tb]
		\centering
		\includegraphics[width=1\columnwidth]{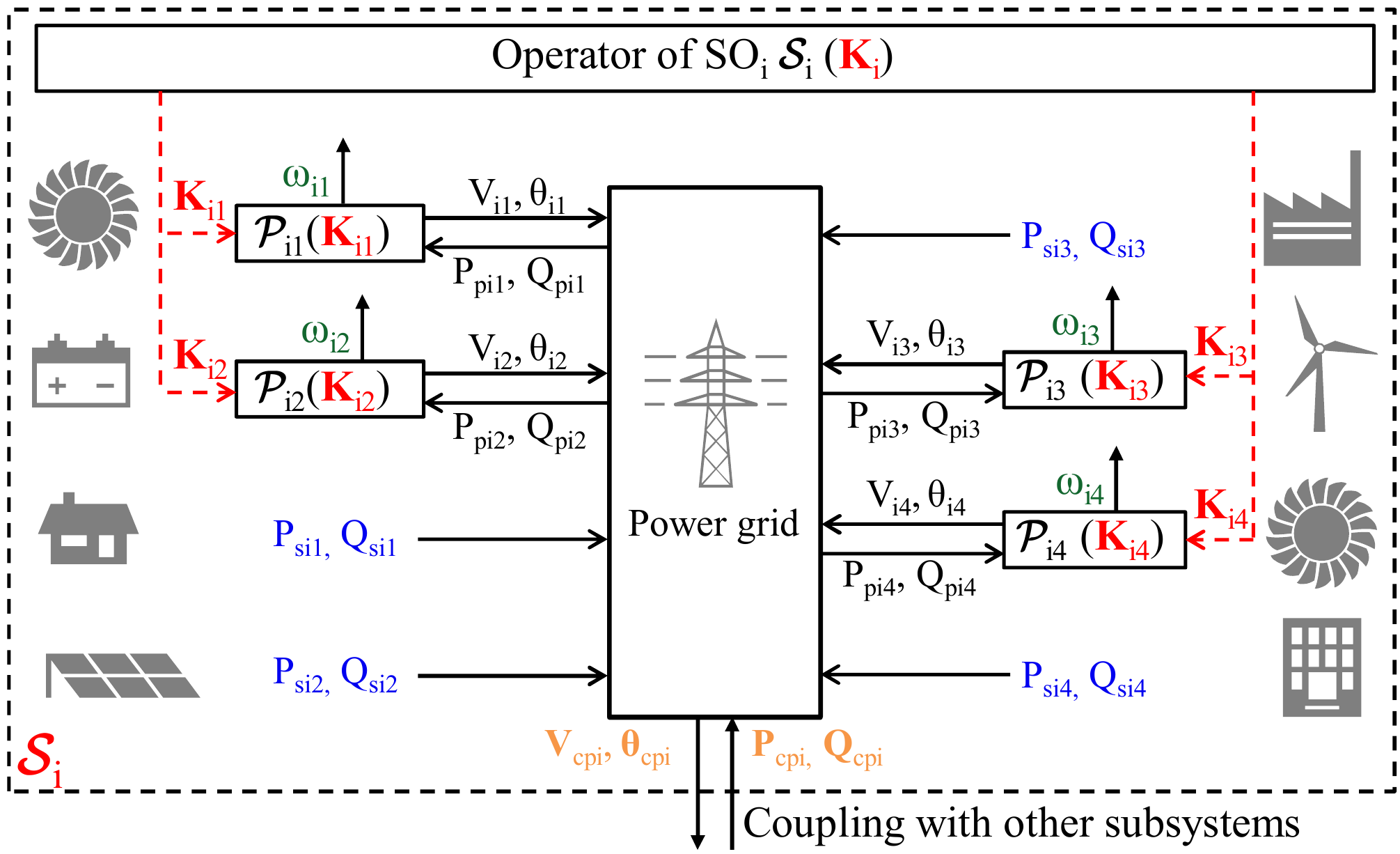}
		\caption{Example subsystem $\cS_i$, operated by the operator SO$_i$. It consists of four dynamic prosumers $\cP_{ij}$ and four static prosumers with power infeeds $P_{sij}$ and $Q_{sij}$. The system operator periodically tunes the controller parameters $\vK_{i} = \myvec{j}(\vK_{ij})$ of the dynamic prosumers, marked red, in order to increase resiliency in the system. The static prosumers, whose infeed is marked with blue, are considered as a disturbance input into the system. The frequencies $\omega_{ij}$ of $\cP_{ij}$, marked green, are used as a part of the performance output of the subsystem. The coupling with other subsystems is realized through the input $\vP_{cpi}$, $\vQ_{cpi}$ and output $\vV_{cpi}$, $\vtheta_{cpi}$, marked orange.}
		\label{fig.SubsystemModel}
	\end{figure}
	
	Figure~\ref{fig.SubsystemModel} shows an exemplary subsystem $\cS_i$. 
	Each subsystem consists of heterogeneous components, such as power plants, renewable generation, storage systems and households. We name these components prosumers, as they can either produce or consume electric power. Thereby, we distinguish between dynamic and static prosumers.
	
	Dynamic prosumers in $\cS_i$, such as power plants, are systems with internal states and dynamics, denoted with $\cP_{ij}$. They posses structured controllers, whose parameters $\vK_{ij}$, marked with red in Fig.~\ref{fig.SubsystemModel}, can be tuned. We consider dynamic prosumers $\cP_{ij}$ which control their voltage magnitude $V_{ij}$ and phase $\theta_{ij}$ at the point of connection, whereas their power infeed into the grid $P_{pij}$ and $Q_{pij}$ are external inputs for the prosumers. This is a standard description, e.g. for conventional power plants with synchronous generators~\cite{kundur93a}, as depicted in Fig.~\ref{fig.SubsystemModel}, where $V_{ij}$ and $\theta_{ij}$ are outputs of $\cP_{ij}$, and $P_{pij}$, $Q_{pij}$ are the external inputs for the prosumers.
	
	Static prosumers have no internal states. They are characterized through their active and reactive power infeed, denoted with $P_{sij}$ and $Q_{sij}$, respectively. Figure~\ref{fig.SubsystemModel} depicts four static prosumers, whose infeeds are marked in blue. We collect the infeeds of static prosumers in $\cS_i$ into vectors $\vP_{si}$ and $\vQ_{si}$, which are considered as external inputs. 
	Static prosumers are used to model two infeed types in power systems. First, they model prosumers with slow dynamics, and whose infeed can be considered constant in the observed time scale. Second, volatile renewable generation and loads are often modeled as static prosumers as well~\cite{poolla2019placement,pddotnuschel2018frequency}. 
	A subset of $\vP_{si}$ and $\vQ_{si}$, representing critical infeeds, models disturbance inputs into the power system model, which we denote with $\vw_{si}$.
	Note that not all elements of $\vP_{si}$ and $\vQ_{si}$ are necessarily a part of $\vw_{si}$. Often a small subset of $\vP_{si}$ and $\vQ_{si}$, representing critical disturbance inputs, is sufficient for the parameter tuning, which reduces computational complexity.
	
	The coupling between the subsystems is represented through the inputs $\vP_{cpi}$ and $\vQ_{cpi}$. They denote the active and reactive powers, respectively, due to the coupling between the $\cS_i$'s. The coupling outputs are the magnitudes $\vV_{cpi}$ and angles $\vtheta_{cpi}$ of the voltage phasors on the border buses (nodes), which are connected through power lines to the other subsystems. We refer to $\vP_{cpi}$ and $\vQ_{cpi}$ as the coupling input $\vw_{ci}$ of $\cS_i$, whereas $\vV_{cpi}$ and $\vtheta_{cpi}$ are the coupling output $\vy_{ci}$. Both are marked orange in Fig.~\ref{fig.SubsystemModel}. Each subsystem can be coupled to other subsystems, i.e. they can have multiple coupling inputs and multiple coupling outputs.
	
	\subsubsection{Power grid model for $\cS_i$}
	\label{subsubsec.PowerGridModel}
	
	The power grid consists of power lines, cables, transformers etc. which interconnect dynamic and static prosumers. In principle, power lines and cables have dynamic states.
	The time constants are, however, orders of magnitude smaller than the relevant dynamics, 
	which are typically slower than 10 Hz~\cite{kundur93a}. For this reason, we neglect the dynamics of the interconnection elements~\cite{kundur93a, schiffer2016survey}. Consequently, the grid, i.e. the power flow, is described by the algebraic power flow equations
	\begin{subequations} 
		\label{eq.PowerFlow}
		\begin{align}
		& P_{ij} = \sum_{k=1}^{N_{Bi}} |V_{aij}||V_{aik}|\big( G_{ci,jk}\cos \Delta\theta_{aijk} + B_{si,jk}\sin \Delta\theta_{aijk} \big)\label{eq.activepower} \\
		& Q_{ij} = \sum_{k=1}^{N_{Bi}}  |V_{aij}||V_{aik}| \big( G_{ci,jk}\sin \Delta\theta_{aijk} - B_{si,jk}\cos \Delta\theta_{aijk}  \big), \label{eq.reactivepower}
		\end{align}
	\end{subequations}
	where $j = 1...N_{Bi}$, $N_{Bi}$ is the number of buses (nodes) in the grid of $\cS_i$,
	$P_{ij} = P_{pij} + P_{sij}+P_{cij}$ and $Q_{ij} = Q_{pij} + Q_{sij}+Q_{cij}$ are the injected active and reactive powers into the j-th bus (node) of the grid in $\cS_i$ by a dynamic prosumer ($P_{pij}$, $Q_{pij}$), static prosumer ($P_{sij}$, $Q_{sij}$) or by a coupling branch ($P_{cij}$, $Q_{cij}$),
	$V_{aij}$ and $\theta_{aij}$ are the magnitude and angle of the voltage phasor at the j-th bus of $\cS_i$, including buses of dynamic prosumers $V_{ij}$, $\theta_{ij}$ and coupling buses $V_{cpij}$, $\theta_{cpij}$,
	$\Delta\theta_{aijk} = \theta_{aij} - \theta_{aik}$,
	and $G_{ci,jk}$ and $B_{si,jl}$ are the elements of the  conductance and susceptance matrix of the power grid in $\cS_i$~\cite{kundur93a}. 
	
	\subsubsection{Power plant modeling}
	\label{subsubsec.ProsumerModel}
	\begin{figure}[tb]
		\centering
		\includegraphics[width=0.7\columnwidth]{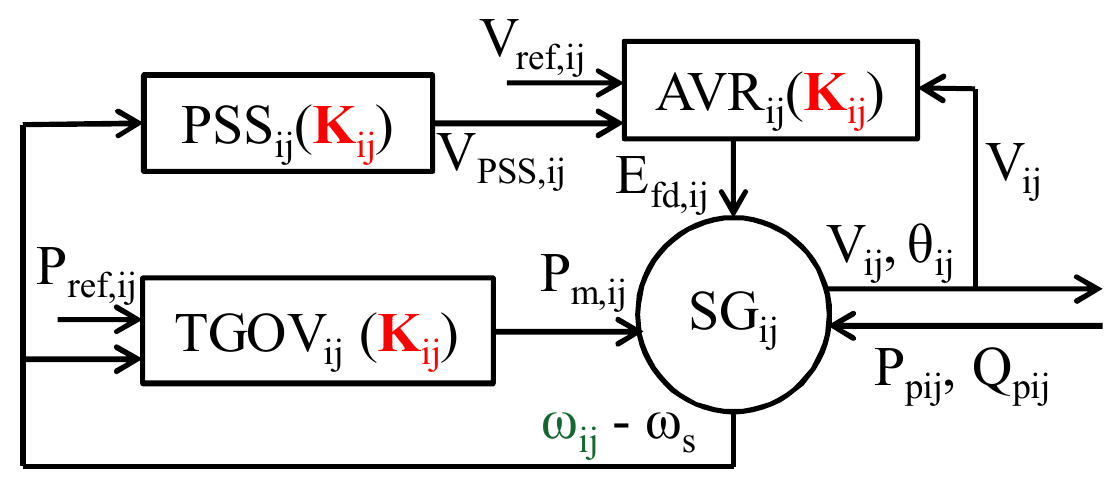}
		\caption{Model of a dynamic prosumer $\cP_{ij}$, a power plant. It consists of a synchronous generator (SG$_{ij}$), automatic voltage regulator and exciter (AVR$_{ij}$), power system stabilizer (PSS$_{ij}$), and of a turbine and governor model (TGOV$_{ij}$).}
		\label{fig.SGModel}
	\end{figure}
	We allow to include arbitrary models of dynamic prosumers. We outline the model of a power plant as an example of a dynamic prosumer, as it is used in the subsequent simulation studies. We consider power plants that consist of a synchronous generator with controllers and actuators, as shown in Fig.~\ref{fig.SGModel}. 
	The dynamics is captured by a 6-th order model for the synchronous generator (SG$_{ij}$)~\cite{kundur93a}.
	
	The automatic voltage regulator and exciter (AVR$_{ij}$) control the voltage at the power plant terminals $V_{ij}$ to be as close as possible to a reference value $V_{ref,ij}$. The output of AVR$_{ij}$ is the field winding voltage $E_{fd,ij}$ which is an input of the SG$_{ij}$.
	As automatic voltage regulators can reduce the stability margin in power systems~\cite{kundur93a}, power plants can be equipped with power system stabilizers (PSS$_{ij}$). PSSs are analogue or digital controllers, with the task to improve the system stability and increase the damping of oscillations in power systems.
	We consider that the PSS$_{ij}$ takes as input the deviation of the generator frequency $\omega_{ij}$ from the nominal system frequency $\omega_s$, while its output $V_{PSS,ij}$ is an additional input of the AVR$_{ij}$.
	The governor and turbine (TGOV$_{ij}$) control the generator frequency by adapting the mechanical power $P_{m,ij}$ transfered to the synchronous generator.
	
	In practice, many different controllers are used, see e.g.~\cite{IEEEExciters06}. All controllers (AVR$_{ij}$, PSS$_{ij}$ and TGOV$_{ij}$), however, contain tunable controller parameters. Examples for these controllers are provided in ~\ref{App.IEEE68Models}.
	
	\subsubsection{Performance outputs for the controller tuning}
	\label{subsubsec.PerfOutput}
	
	Performance outputs are used in the controller tuning procedure to define an objectively quantifiable tuning goal. In power systems, the frequencies of dynamic prosumers $\omega_{ij}$ are typically used to asses the system performance~\cite{kundur93a}. They are defined by $\omega_{ij} = \dot{\theta}_{ij}$, where $\theta_{ij}$ is the angle of the voltage phasor of $\cP_{ij}$, marked green in Figs.~\ref{fig.SubsystemModel} and~\ref{fig.SGModel}.
	Consequently, we choose the vector of all frequencies (or a subset) as a possible performance output~\cite{mesanovic2018optimalparameter}
	\begin{align}
	\vomega_i = \begin{pmatrix} \omega_{i1} & ... & \omega_{iN_{Di}}\end{pmatrix} ^T, \label{eq.omegaAllOut}
	\end{align}
	where $N_{Di}$ is the number of dynamic prosumers in $\cS_i$.
	When oscillations between subsystems are considered, prosumers in one subsystem oscillate against prosumers of other subsystems. Thus, choosing the so-called center-of-inertia frequency~\cite{UlbigLowInertiaImpact} as the performance output for one subsystem is a standard choice when oscillations between subsystems should be suppressed
	\begin{align}
	\omega_{COIi} = \sum_{j = 1}^{N_{Di}} J_{ij} \omega_{ij} \Big/  \sum_{j = 1}^{N_{Di}} J_{ij}. \label{eq.COIFreq}
	\end{align}
	Here $J_{ij}$ is the inertia of the $j$-th dynamic prosumer. The center-of-inertia frequency frequency represents the weighted arithmetical mean of all generator frequencies.
	
	\subsubsection{Overall model of subsystem $\cS_i$}
	\label{subsubsec.SubsysModel}
	
	Combining the power grid equations~\eqref{eq.PowerFlow} with the prosumer models, the dynamical model of $\cS_i$ is given by a differential-algebraic nonlinear model
	\begin{subequations} \label{eq.nonlinearModel}
		\begin{align}
		\dot{\vx}_i =& f_i(\vx_i, \vw_{si}, \vw_{ci},\vK_i) \\
		0 =& h_i(\vx_i, \vw_{si},\vw_{ci},\vK_i). \label{eq.nonlinearModel.alg}
		\end{align}
	\end{subequations}
	Here $\vx_i \in \R^{\cdot N_{xi}}$ combines all dynamic prosumer states of $\cS_i$,
	$\vw_{si}$ is the vector of inputs from static prosumers, represented by a subset of $\vP_{si}$ and $\vQ_{si}$, 
	$\vw_{ci} \in \R^{n_{ci}}$ is the vector of coupling inputs, represented by $\vP_{cpi}$ and $\vQ_{cpi}$,
	$\vK_{i} \in \R^{N_{ki}}$ is the vector of tunable controller parameters of all dynamic prosumers in $\cS_i$,
	$f_i$ describes the prosumer dynamics in $\cS_i$,
	and $h_i$ represents the power flow equation~\eqref{eq.PowerFlow} in $\cS_i$. All considered outputs of~\eqref{eq.nonlinearModel} are linear combinations of the elements in $\vx_i$.
	
	We focus on the small-signal behavior of the system around a nominal operating point. Thus, we linearize~\eqref{eq.nonlinearModel} around the known steady-state $\vx_{i0}$ with the known inputs $\vw_{si0}$ and $\vw_{ci0}$. While this is a first-order approximation for small-signal deviations, it allows us to use methods developed for linear systems. It is furthermore often sufficient even in case of large-scale disturbances~\cite{mesanovic2018optimalparameter,mesanovic2018ISGT,poolla2019placement}. 
	\begin{rem}[steady-states]
		The steady state of the power grid in $\cS_i$ can be estimated based on measurements of voltages and power infeeds~\cite{gomez2004power}, for which commercial tools exist. This allows to obtain initial values for the steady-states of dynamic prosumers.
	\end{rem}
	The linearized algebraic equations obtained from~\eqref{eq.nonlinearModel.alg} have full rank, leading to 
	\begin{align} \label{eq.linearizedModel1}
	\dot{ {\Delta \vx}}_i &= \widetilde A_i(\vK_i) {\Delta \vx_i} +  \widetilde B_{ci} (\vK_i) {\Delta \vw_{ci}}  +  \widetilde B_{si} (\vK_i) {\Delta \vw_{si}},
	\end{align}
	where $\Delta \vx_i = \vx_i - \vx_{i0}$, $\Delta \vw_{ci} = \vw_{ci} - \vw_{ci0}$, and  $\Delta \vw_{si} = \vw_{si} - \vw_{si0}$. Subsequently, we skip $\Delta$ for notational convenience, i.e. we use deviations from the linearization point in the rest of this work.
	We note that the system matrix $\widetilde A_i(\vK_i)$ has an eigenvalue at zero, as the coupling power flow equation~\eqref{eq.PowerFlow} is invariant under offsets $\widetilde \theta_{ij} = \theta_{ij} + \delta \theta$, where $\delta  \theta \in \R$ is identical for all $j$. Reducing this zero eigenmode~\cite{wu2016input}, leads to
	\begin{align} \label{eq.linearizedModel}
	\dot{ \vx}_i &=  A_i(\vK_i)  \vx_i +   B_{ci} (\vK_i)  \vw_{ci}  +   B_{si} (\vK_i)  \vw_{si}.
	\end{align}
	We note that~\eqref{eq.linearizedModel} is exponentially stable for a suitable choice of $\vK_i$. This assumption does not restrict applicability, as all practical power systems have controllers which stabilize the system.
	
	The proposed approach is based on different transfer functions, for which we need to define the outputs of interest, c.f. Fig.~\ref{fig.BasicOutline}.
	First, a transfer function is needed for each subsystem which couples the disturbance and coupling inputs $\vw_{ci}$, $\vw_{si}$ and outputs $\vy_{ci} = \begin{pmatrix} \vV_{cpi}^T & \vtheta_{cpi}^T  \end{pmatrix}^T$, $\vy_{pi} = \omega_{COIi}$, defined as
	\begin{align}
	\begin{pmatrix}
	\vy_{ci} \\ \vy_{pi}
	\end{pmatrix} &= G_i(\vK_i, s) \begin{pmatrix}
	\vw_{ci} \\ \vw_{si}
	\end{pmatrix} \nonumber \\&= \begin{pmatrix}
	G_{cci}(\vK_i, s) & G_{csi}(\vK_i, s) \\ G_{pci}(\vK_i, s) & G_{psi}(\vK_i, s)
	\end{pmatrix} \begin{pmatrix}
	\vw_{ci} \\ \vw_{si}
	\end{pmatrix}. \label{eq.TFSubsystem}
	\end{align}
	The transfer function $G_{csi}(\vK_i, s)$ represents algebraic power flow equations which couple the power of static prosumers $\vw_{si}$ with the coupling output $\vy_{ci}$. Thus, $G_{csi}(\vK_i, s)$ is not a function of controller parameters and is written as a constant matrix, i.e. $G_{csi}(\vK_i, s) = M_{si}$.
	
	The second transfer function, $G_i^*(\vK_i,s)$, is used for the $\Hinf$ controller tuning, see Fig.~\ref{fig.BasicOutline}.
	For this purpose, the vector of all dynamic prosumer frequencies is defined as the performance output, described with~\eqref{eq.omegaAllOut}. We consider $ \vw_{ci}$ and $ \vw_{si}$ as disturbance inputs, because disturbances can come from other subsystems through the coupling input, as well as from internal static prosumers. Thereby, if $\cS_i$ is not coupled with other systems, i.e. if $\cS_i$ represents an isolated system, the vector $\vw_{ci}$ is empty.
	Thus, $G_i^*(\vK_i,s)$ is in this case given by
	\begin{align}
	\vy_{pi}^* &= \vomega_{i} = G_i^*(\vK_i,s) \begin{pmatrix}  \vw_{ci} \\  \vw_{si} \end{pmatrix}, \label{eq.TFOmegas}
	\end{align}

	Further details on the use of the transfer functions can be found in Section~\ref{sec.DistTuning}.
	
	\subsection{Coupled power system model}
	\label{Subsec.CoupledModel}	
	\begin{figure}[tb]
		\centering
		\includegraphics[width=0.55\columnwidth]{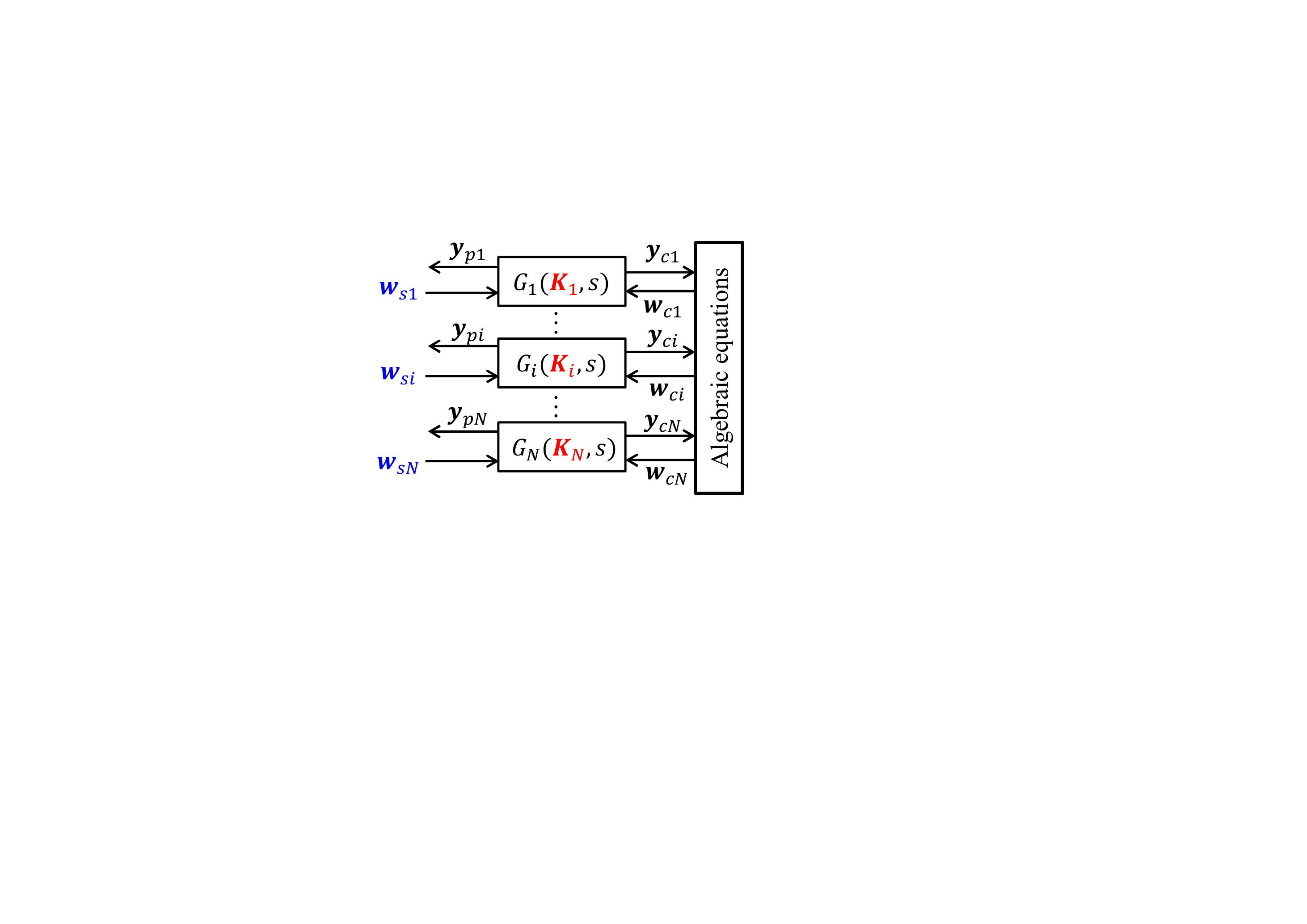}
		\caption{Coupled power system model consisting of N subsystems $\cS_i$.}
		\label{fig.SysStructOutline}
	\end{figure}
	
	The previous subsection outlined the modeling and structure for a single $\cS_i$. We now outline how the subsystems are coupled to obtain the model of the entire power system.
	Physically, the subsystems are coupled via power lines, which we model using linearized algebraic power flow equations~\eqref{eq.PowerFlow}, c.f. Fig.~\ref{fig.SysStructOutline}.
	The coupling between the subsystems is given by
	\begin{align} \label{eq.LinCoupling}
	\vw_c = M \vy_c,
	\end{align}
	where $ \vw_c = \myvec{i} (\vw_{ci})$, $\vy_c = \myvec{i} (\vy_{ci})$, and M represents the linearized coupling between the subsystems and is a full-rank matrix.
	The transfer function of the coupled system, i.e. $G(\vK,s)$, from the disturbance input $\vw_s = \myvec{i} (\vw_{si})$ to the performance output $\vy_p = \myvec{i}(\vy_{pi})$, which quantifies oscillations between subsystems, is obtained by combining ~\eqref{eq.LinCoupling} with $G_{ci}$ and $G_{pi}$ in~\eqref{eq.TFSubsystem}
	\begin{align}
	&G(\vK,s) = \blkdiag{i} (G_{psi}(\vK_i, s)) \\ 
	&+ \blkdiag{i} (G_{pci}(\vK_i, s)) \left( I - M \blkdiag{i} (G_{cci}(\vK_i, s)) \right) ^{-1} M \blkdiag{i} (M_{si}) \nonumber
	\label{eq.DetailedCoupledSys}
	\end{align}
	where $\vK = \myvec{i}(\vK_i)$. The transfer function $G(\vK, s)$ has several challenging properties: (1) The dependency of G on $\vK$ is nonlinear, (2) the coupled system can be very large with thousands of states, (3) no single entity knows the parameters and detailed structure of $G(\vK, s)$. Only the SO of $\cS_i$ is aware of all parameters of $G_i(\vK_i, s)$. The SC knows only the parameters of the coupling matrix $M$.
	
	Given this setup, we can formulate the main research question considered in this work: what are the optimal parameters $\vK$ to minimize the $\Hinf$ norm of $G(\vK,s)$?
	
	
	We will first review a centralized approach before introducing the proposed hierarchical approach.
	
	\section{Centralized  controller tuning}
	\label{sec.CentTuning}
	
	One way to tune $\vK$ is by centralized tuning. In the following, we outline the centralized $\Hinf$ tuning algorithm presented in~\cite{mesanovic2018optimalparameter}, because it serves as the basis for the proposed hierarchical tuning.
	The $\Hinf$ norm of a stable system $G(s)$, denoted with $\|G(s)\|_\infty$~\cite{boyd1985subharmonic} is defined by
	\begin{align}
	\!\!\!\!\! \|G(s)\|_\infty \! \DefinedAs \!  \text{sup}_{s \in \C_{>0}} \: \bigSigma \: (G(s)) = \text{sup}_{\omega \in \R_{\ge 0}} \: \bigSigma \: (G(j \omega)).
	\end{align}
	We propose to use the $\Hinf$ norm as the optimization criterion, because it represents the maximal amplification of amplitude of any harmonic input signal in any output direction.
	Thus, minimizing the $\Hinf$ norm minimizes the worst-case amplification of oscillation frequencies. 
	Minimizing the $\Hinf$ norm thus allows to improve robustness of systems.
	We minimize $\|G(\vK, s)\|_\infty$ by optimizing the vector of parameters $\vK$.
	
	The centralized tuning approach is based on the following theorem:
	\begin{thm}[Centralized tuning] \cite{mesanovic2018optimalparameter} \label{thm.CentTuningStab}
		Given a detectable multiple-input-multiple-output system which is a continuous nonlinear function of the vector of tunable controller parameters $\vK$, denoted by $G(\vK, s)$.
		Furthermore, given an initial, exponentially stabilizing parameterization $\vK_{0}$. Assuming there are no cancellations of parameter-dependent poles and zeros on the imaginary axis, there exists a sufficiently large discrete set of frequencies $\Omega$, such that the solution of
		\begin{subequations} \label{eq.FiniteOptProb}
			\begin{align}
			\!\!\!\!\!\!\!\min_{\gamma, \vK}  & \quad \gamma \\
			\!\!\!\!\!\!\!\text{s.t.} \quad & \begin{pmatrix}
			\gamma I     & G(\vK, j\omega_k) \\
			G(\vK, j\omega_k)^* & \gamma I
			\end{pmatrix} \succ 0, \quad \forall \omega_k \in \Omega \label{eq.finiteConstraints} \\
			& \ul{\vK} \leq \vK \leq \ol{\vK}
			\end{align}
		\end{subequations}
		is a stabilizing controller which minimizes $\HinfNorm{G(\vK, s)}$ for the set of frequencies $\Omega$. Here $\ul{\vK}$ and $\ol{\vK}$ are box constraints on the controller parameters, which may be $\pm \infty$.
	\end{thm}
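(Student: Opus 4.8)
The plan is to first turn the semidefinite program~\eqref{eq.FiniteOptProb} into an equivalent frequency-sampled peak-gain problem, and then to show that a sufficiently fine sampling set $\Omega$ simultaneously certifies closed-loop stability and renders the sampled problem faithful. Applying the Schur complement to the block inequality in~\eqref{eq.finiteConstraints}: since the $(1,1)$ block $\gamma I$ must be positive definite, the constraint at $\omega_k$ is equivalent to $\gamma>0$ together with $\gamma^2 I - G(\vK,j\omega_k)^* G(\vK,j\omega_k)\succ 0$, i.e.\ to $\bigSigma\!\big(G(\vK,j\omega_k)\big)<\gamma$. Hence~\eqref{eq.FiniteOptProb} is exactly
\[
\min_{\vK\,\in\,[\ul{\vK},\,\ol{\vK}]}\ \max_{\omega_k\in\Omega}\ \bigSigma\!\big(G(\vK,j\omega_k)\big),
\]
and its optimal value equals the attained $\gamma$. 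Observe that $\vK_{0}$ is feasible with a finite objective $\gamma_{0}\DefinedAs\max_{\omega_k\in\Omega}\bigSigma(G(\vK_{0},j\omega_k))$, because $G(\vK_{0},s)$ is exponentially stable and therefore has no poles on the imaginary axis, so $\bigSigma(G(\vK_{0},j\omega))$ is finite for every real $\omega$.

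The core of the argument is a barrier property. I claim that if a parameterization $\vK$ in the box places a pole of $G(\vK,s)$ at a point $j\omega^{\star}$ of the imaginary axis, then $\bigSigma\!\big(G(\vK,j\omega)\big)\to\infty$ as $\omega\to\omega^{\star}$: detectability guarantees that a marginally stable or unstable mode of the underlying dynamics is observable, and the absence of cancellations of parameter-dependent poles and zeros on the imaginary axis guarantees it is not annihilated by a coincident zero, so it survives as a genuine pole of $G$. I would then combine this with two compactness facts. First, on the compact box $[\ul{\vK},\ol{\vK}]$ the poles of $G(\vK,s)$ depend continuously on $\vK$, and by properness of $G$ there are a radius $R$ and a bound $\beta$ with $\bigSigma(G(\vK,j\omega))\le\beta$ for all $|\omega|\ge R$ and all $\vK$ in the box; in particular no pole can reach the imaginary axis outside $[-jR,jR]$. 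Second, by uniform continuity of $(\vK,\omega)\mapsto\bigSigma(G(\vK,j\omega))$ away from the poles, together with the blow-up above, there is a mesh size $\delta>0$ so that any grid $\Omega$ which contains a point beyond $R$ and has consecutive spacing $\le\delta$ on $[-R,R]$ satisfies: every $\vK$ in the box whose transfer function has a pole within a fixed distance of the imaginary axis has $\max_{\omega_k\in\Omega}\bigSigma(G(\vK,j\omega_k))>\gamma_{0}$. Fix such an $\Omega$.

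With this $\Omega$ in hand, set $\phi(\vK)\DefinedAs\max_{\omega_k\in\Omega}\bigSigma(G(\vK,j\omega_k))$, which is continuous on the box, and let $C$ be the connected component of the sublevel set $\{\vK\in[\ul{\vK},\ol{\vK}]:\phi(\vK)\le\gamma_{0}\}$ that contains $\vK_{0}$. If some $\vK\in C$ were not exponentially stabilizing, a path in $C$ from $\vK_{0}$ to $\vK$ would, by continuity of the poles of $G$, contain a first point $\vK^{\star}$ at which a pole lies on the imaginary axis; by the barrier property this forces $\phi(\vK^{\star})>\gamma_{0}$, contradicting $\vK^{\star}\in C$. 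Hence every $\vK\in C$ — in particular the solution reached by the continuation-type iteration started at the feasible point $\vK_{0}$ — is exponentially stabilizing, and by the Schur-complement reformulation the attained $\gamma$ equals $\phi$ at that solution, so the computed controller minimizes $\HinfNorm{G(\vK,s)}$ evaluated over the frequency set $\Omega$.

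The hard part is the second paragraph: making the barrier property uniform over the whole parameter box with a \emph{finite} grid. This is precisely where detectability and the no-imaginary-axis-cancellation hypothesis are indispensable — without them an imaginary-axis pole could be invisible in $\bigSigma(G(\vK,j\omega))$, so no grid could register a crossing — and where properness of $G$ is needed both to exclude poles escaping to infinity along $j\R$ and to supply the uniform tail bound $\beta$. A secondary subtlety is that the sampled problem is nonconvex in $\vK$, so the solution must be read as the one reached from $\vK_{0}$, i.e.\ the minimizer over the component $C$; the remaining ingredients — the Schur complement, feasibility of $\vK_{0}$, and the optimality conclusion — are then routine.
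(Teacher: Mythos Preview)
The paper does not actually prove this theorem: immediately after the statement it writes ``We refer to~\cite{mesanovic2018optimalparameter} for a proof of Theorem~\ref{thm.CentTuningStab}.'' There is therefore no in-paper argument to compare your proposal against; the result is quoted from the cited reference.

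On its own merits your outline is sound in spirit --- the Schur-complement reduction to $\bigSigma(G(\vK,j\omega_k))<\gamma$ is correct, and the barrier/homotopy idea (an imaginary-axis crossing forces the sampled peak gain above $\gamma_0$, so the connected sublevel component through $\vK_0$ contains only stabilizing parameters) is exactly the mechanism one expects behind such a statement. Two points deserve tightening. First, you repeatedly invoke compactness of $[\ul{\vK},\ol{\vK}]$ to obtain the uniform tail bound $\beta$, the uniform mesh size $\delta$, and uniform continuity; the theorem, however, explicitly allows $\ul{\vK},\ol{\vK}=\pm\infty$, so as stated your argument only covers the bounded-box case (or you must argue that the relevant sublevel set is bounded, which is not automatic). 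Second, you assume properness of $G$ to get the high-frequency bound; this is natural for the power-system models at hand but is not part of the hypotheses as written, so it should be stated as an additional standing assumption. Your closing remark that ``the solution'' must be read as the minimizer reached by continuation from $\vK_0$ is apt and matches how the cited reference treats the nonconvexity.
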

	We refer to~\cite{mesanovic2018optimalparameter} for a proof of Theorem~\ref{thm.CentTuningStab}.
	
	Even though centralized tuning improves the performance of the system and eliminates oscillations, it is difficult to be applied in practice for large systems: due to safety and privacy reasons, there will be no entity which has access to all parameters for large, interconnected power systems like the European power system or the western interconnection in the US, i.e. of $G(\vK,s)$. The centralized approach is furthermore limited with respect to the maximal size of the system which can be optimized. Hierarchical structured $\Hinf$ controller tuning, introduced in the next section, simultaneously addresses both issues: it provides better scalability of the approach while simultaneously increasing data privacy.
	
	
	\section{Hierarchical parameter tuning}
	\label{sec.DistTuning}
	
	As outlined previously, centralized tuning is often undesired or not possible. Thus, we introduce in this Section the hierarchical decentralized tuning approach. The hierarchical tuning is based on the ideas of Fig.~\ref{fig.BasicOutline}.
	To allow scalability and to increase privacy, we propose the following steps in the k-th iteration of the algorithm, compare also Fig.~\ref{fig.OptOutline}:
	\begin{enumerate}
		\item Model reduction of the subsystems: all SOs first calculate a reduced model of their subsystem, denoted with $\tG_i^\kit(\vtK_{i,init}^\kit,s)$, based on the detailed model $G_i(\vK_{i,init}^\kit,s)$, which is defined in~\eqref{eq.TFSubsystem}. Here $\tG_i$ and $\vtK_i$ denote the reduced model of $\cS_i$ and its parameter vector, respectively, and $(k)$ denotes the iteration. The detailed parameters and dynamics of the individual prosumers and of the power grid are "hidden" in $\tG_i$ and $\vtK_i$, which leads to increased data privacy. The SOs send $\tG_i^\kit$ and $\vtK_{i,init}^\kit$ to the SC.
		
		\item Centralized $\Hinf$ parameter tuning based on the reduced subsystem models: the system coordinator (SC) couples the reduced subsystems and calculates a reduced model of the whole system $\tG^\kit(\vtK,s)$. It optimizes the reduced system, which has a smaller complexity than an overall detailed model capturing the detailed dynamics of all subsystems. The optimized reduced set of parameters $\vtK_{i,opt}^\kit$ are sent back to the SOs. \label{step.Optimization}
		
		\item Model matching: The optimized reduced subsystems $\tG_i^\kit(\vtK_{i,opt}^\kit,s)$ serve as a reference model for the SOs. They optimize the parameters of their respective detailed models $G_i(\vK_i,s)$ to match the reference model to the best possible extent.
	\end{enumerate}
	This iterative process is performed repeatedly until a stopping criterion is fulfilled. Step~\ref{step.Optimization}, i.e. the optimization step based on the reduced models, contains the main idea of the approach: instead of optimizing the entire detailed system, the SC uses the reduced models, leading to improved data privacy and scalability of the approach. 
	However, the optimization of reduced models introduces challenges with respect to the model reduction and model matching steps. In general, $\vtK_{init}^\kit$ is not the same as $\vtK_{i,opt}^\kitprev$, as $\vtK_{init}^\kit$ is obtained from the model reduction step. 
	
	In the next sections, we provide details for each step of the approachs. We start with the optimization step in order to better clarify the main idea of the approach. Afterwards, we detail the model matching step before turning to the model reduction step, which is detailed last as the requirements are derived from the optimization and model matching steps.
	
	\begin{figure}[tb]
		\centering
		\includegraphics[width=1\columnwidth]{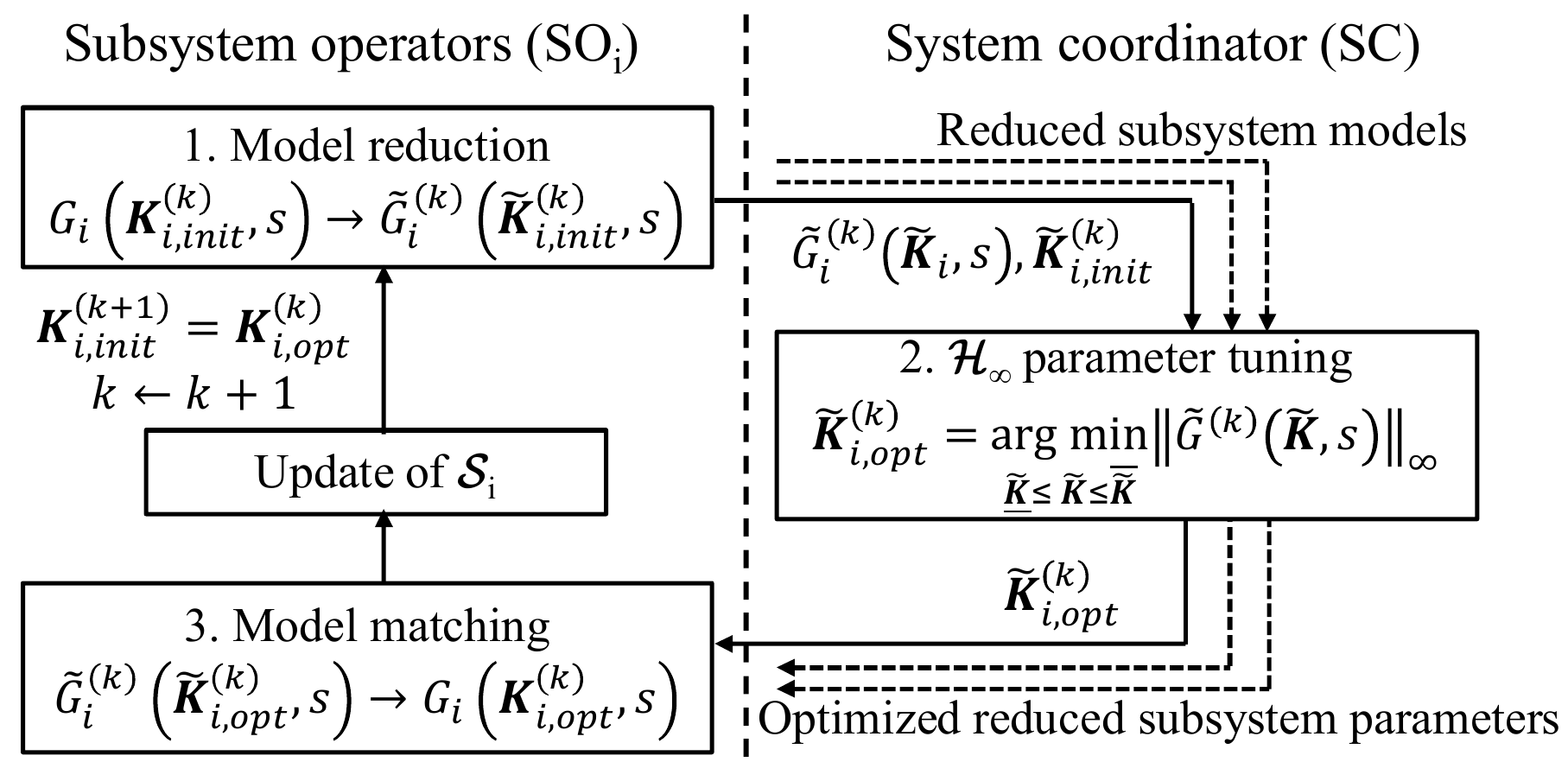}
		\caption{Details of the proposed optimization approach.}
		\label{fig.OptOutline}
	\end{figure}
	
	\subsection{Structured $\Hinf$ parameter tuning of the reduced system}
	
	For the tuning, the SC receives reduced models of each subsystem, i.e. $\tG_i^\kit(\vtK_i,s)$, 
	together with initial reduced parameter vectors $\vtK_{i,init}^\kit$ from each SO. As the SC knows the coupling matrix $M$, the overall reduced model can be formulated
	\begin{align}\label{eq.CoupledRedSys}
	&\tG(\vtK,s) = \blkdiag{i} (\tG_{psi}(\vtK_i, s)) \\ 
	&+ \blkdiag{i} (\tG_{pci}(\vtK_i, s)) \left( I - M \blkdiag{i} (\tG_{cci}(\vtK_i, s)) \right) ^{-1} M \blkdiag{i} (M_{si}) \nonumber
	\end{align}
	Based on this model, the SC performs the structured $\Hinf$ controller tuning
	\begin{align}
	\vtK_{opt}^\kit = \myvec{i} (\vtK_{i,opt}^\kit) = \argmin_{ \ul{\vtK} \le \vtK \le \ol{\vtK} } \left\| \tG(\vtK, \jw) \right\|_\infty \label{eq.StructReducedOpt}
	\end{align}
	with~\eqref{eq.FiniteOptProb}, 
	where $\ul{\vtK}$ and $\ol{\vtK}$ are box constraints on the reduced vector of controller parameters. The optimization is initialized with $\vtK_{init}^\kit = \myvec{i}(\vtK_{i,init}^\kit)$. The optimized parameter vectors $\vtK_{i,opt}^\kit$ are sent to the SOs.
	
	\subsection{Model matching step}
	
	When the SO of the i-th subsystem receives the reduced optimized parameter vector $\vtK_{i,opt}^\kit$ from the SC, it uses $\tG^\kit(\vtK_{i,opt}^\kit,\jw)$ as the reference model to adapt the parameters of the full subsystem, such that the detailed model best matches the optimized reduced model. The SO$_i$ solves an $\Hinf$ model matching problem to obtain the vector of detailed controller parameters of $\cS_i$ in the k-th iteration
	\begin{align}
	\!\!\!\!\!\! \vK_{i,opt}^\kit \!=\! \argmin_{\ul{\vK}_{i} \le \vK_{i} \le \ol{\vK}_{i}} &\left\| G_i(\vK_i, \jw) - \tG_i^\kit(\vtK_{i,opt}^\kit, \jw) \right\|_\infty \label{eq.ModelMatching}
	\end{align}
	where $\ul{\vK}_{i}$ and $\ol{\vK}_{i}$ represent box constraints on the controller parameters. From the solution of this problem, the optimized detailed parameter vector in the k-th iteration $\vK_{i,opt}^\kit$ is obtained for each area. This parameter vector is used as the initial value for the model reduction for the next updated iteration.
	
	\subsection{Model reduction step}
	From the previous two steps, the following requirements for the reduced model $\tG_i^\kit(\vtK,\jw)$ are evident:
	\begin{enumerate}
		\item The error between $\tG_i^\kit(\vtK^\kit_{i,init},\jw)$ and $G_i(\vK_{i,init}^\kit,\jw)$ should be small.
		\item The reduced models need to have tunable parameters $\vtK_i^\kit$ which can be optimized in the optimization step.
		\item The $\Hinf$ error between $\tG_i^\kit(\vtK_{i,opt}^\kit,\jw)$ and $G_i(\vK_{i,opt}^\kit,\jw)$ after the model matching step should be small. This means that the reduced model needs to have a representative, realistic dependency on the vector of reduced tunable parameters $\vtK_i$, i.e. similar to the dependency of the detailed model on the full parameter vector $\vK_i$.
	\end{enumerate}
	Due to the second requirement, the application of unstructured model reduction approaches, such as balanced model order reduction, is difficult, as they do not retain parametric dependencies in the reduced model. This can be overcome, e.g., by introducing an additional static state-feedback controller for the reduced model whose parameters can be optimized. However, such a model may have very different dynamic properties after the optimization step, meaning that the model matching step may not allow to reduce the error between the detailed model and the reduced model, i.e. the third requirement may not be satisfied.
	
	Instead, we propose to perform the model reduction step by selecting a structured reference model for one area. The reference model has a fixed structure and is parameterized to match the detailed model. Structured reduced models have been widely used in the power system community, replacing groups of interconnected power plants by a small number of power plants, see e.g. in~\cite{chow1995inertial}. Every SO obtains its reduced model $\tG_i$ by solving the following parameter matching problem
	\begin{align}
	&(\vR_i^\kit\!\!,\vtK_{i,init}^\kit)\!\!
	%
	%
	=\!\!\!\!\!\argmin_{\substack{ \ul{\vR}_i \le \vR_i \le \ol{\vR}_i \\ \ul{\vtK}_i \le \vtK_i \le \ol{\vtK}_i  }} \!\!\left\|  G_{i}'(\vR_i,\vtK_i,\jw)   \!\!-\!\!  G_{i}(\vK_{i,init}^\kit,\jw) \right\|_\infty\!\!\!\!.
	\label{eq.ModelReductionProblem}
	\end{align}
	Here $G_{i}'$ is the reduced model with the predefined structure, and $\vR_i$ is the vector of other model parameters.
	For power systems, $\vR_i$ can represent, for example, the physical parameters of TGOV$_{ij}$, AVR$_{ij}$, and PSS$_{ij}$.
	The reduced model becomes
	\begin{align}
	\tG_{i}^\kit(\vtK_{i}, \jw) \DefinedAs G_{i}' (\vR_i^\kit, \vtK_i, \jw).
	\end{align}
	The model reduction is outlined in detail in Section~\ref{subsec.StructRed4PowSys}.
	
	\subsection{Condition for the improvement of the system $\Hinf$ norm}
	\label{subsec.ImprovCond}
	
	The minimization of $\HinfNorm{\tG(\vtK,j\omega)}$ does not guarantee that $\HinfNorm{G(\vK,j\omega)}$ will be reduced per iteration. 
	For this reason, we introduce the following Lemma:
	\begin{lem}
		$\HinfNorm{G(\vK_{init}^\kit,s)}$ is reduced if and only if
		\begin{align}
		& \max_{\omega \in \R} \: \: \epsilon_{opt}^\kit(\omega) - \epsilon_{init}^\kit(\omega) + \alpha^\kit (\omega) + \bigSigma(G(\vK_{init}^\kit, \jw)) \nonumber \\ 
		& \qquad < \max_{\omega \in \R} \: \:  \bigSigma(G(\vK_{init}^\kit, \jw)). \label{eq.ImprovCond}
		\end{align}
		Here $\epsilon_{init}^\kit(\omega)$ denotes the singular value error between the detailed model $G(\vK_{init}^\kit, \jw)$ and the reduced model $\tG(\vtK_{init}^\kit, \jw)$ after the model reduction step
		\begin{align}
		\epsilon_{init}^\kit(\omega) = \bigSigma(G(\vK_{init}^\kit, \jw)) - \bigSigma( \tG(\vtK_{init}^\kit, \jw)). \label{eq.RedErr}
		\end{align}
		Furthermore, $\epsilon_{opt}^\kit(\omega)$ is the error term which occurs after the model matching step between $\tG(\vtK_{opt}^\kit, \jw)$ and $G(\vK_{opt}^\kit, \jw)$
		\begin{align}
		\epsilon_{opt}^\kit(\omega) = \bigSigma(G(\vK_{opt}^\kit, \jw)) - \bigSigma( \tG(\vtK_{opt}^\kit, \jw)), \label{eq.MatchErr}
		\end{align}
		and $\alpha^\kit(\omega)$ quantifies the change in the singular values of the reduced system during the parameter tuning in the k-th iteration
		\begin{align}
		\alpha^\kit (\omega) = \bigSigma(\tG(\vtK_{opt}^\kit, \jw)) - \bigSigma(\tG(\vtK_{init}^\kit,\jw)). \label{eq.AlphaOpt}
		\end{align}
	\end{lem}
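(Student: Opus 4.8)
The plan is to reduce the claim to a single pointwise-in-frequency identity that expresses the largest singular value of the detailed model \emph{after} the matching step, $\bigSigma(G(\vK_{opt}^\kit,\jw))$, in terms of $\bigSigma(G(\vK_{init}^\kit,\jw))$ and the three tabulated quantities $\epsilon_{init}^\kit(\omega)$, $\epsilon_{opt}^\kit(\omega)$, $\alpha^\kit(\omega)$, and then to take the supremum over $\omega$. The statement ``$\HinfNorm{G(\vK_{init}^\kit,s)}$ is reduced'' is read as $\HinfNorm{G(\vK_{opt}^\kit,s)} < \HinfNorm{G(\vK_{init}^\kit,s)}$, i.e. one full iteration (reduction, tuning of the reduced model, model matching) strictly decreases the detailed-system norm.

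First I would telescope the definitions \eqref{eq.RedErr}, \eqref{eq.MatchErr} and \eqref{eq.AlphaOpt}. Solving \eqref{eq.MatchErr} for $\bigSigma(G(\vK_{opt}^\kit,\jw))$, inserting \eqref{eq.AlphaOpt}, and finally \eqref{eq.RedErr} gives, for every $\omega$,
\begin{align}
\bigSigma(G(\vK_{opt}^\kit,\jw)) = \epsilon_{opt}^\kit(\omega) - \epsilon_{init}^\kit(\omega) + \alpha^\kit(\omega) + \bigSigma(G(\vK_{init}^\kit,\jw)). \nonumber
\end{align}
This step is purely algebraic; no inequality or singular-value perturbation bound is needed, because $\epsilon_{init}^\kit$, $\epsilon_{opt}^\kit$, $\alpha^\kit$ are defined as \emph{exact} differences of largest singular values rather than as norm bounds.

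Next I would take $\sup_{\omega \in \R_{\ge 0}}$ on both sides. Since the four transfer functions $G(\vK_{init}^\kit,s)$, $\tG(\vtK_{init}^\kit,s)$, $\tG(\vtK_{opt}^\kit,s)$ and $G(\vK_{opt}^\kit,s)$ are all stable and proper rational -- the detailed models by the standing stability assumption on \eqref{eq.linearizedModel} and the reduced models by construction, each optimization in \eqref{eq.StructReducedOpt}, \eqref{eq.ModelMatching}, \eqref{eq.ModelReductionProblem} being restricted to a stabilizing parameterization -- the suprema are finite and attained, so they may be written as $\max_{\omega\in\R}$ as in the statement, with $\HinfNorm{G(\vK_{init}^\kit,s)} = \max_\omega \bigSigma(G(\vK_{init}^\kit,\jw))$ and $\HinfNorm{G(\vK_{opt}^\kit,s)} = \max_\omega\bigl(\epsilon_{opt}^\kit(\omega) - \epsilon_{init}^\kit(\omega) + \alpha^\kit(\omega) + \bigSigma(G(\vK_{init}^\kit,\jw))\bigr)$. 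Substituting these two expressions into $\HinfNorm{G(\vK_{opt}^\kit,s)} < \HinfNorm{G(\vK_{init}^\kit,s)}$ yields precisely \eqref{eq.ImprovCond}, and the argument is reversible, so the equivalence holds in both directions simultaneously.

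The proof has essentially no hard step: it is a bookkeeping identity followed by one supremum. The only point deserving a line of care is the well-posedness issue -- confirming that all $\Hinf$ norms involved are finite and that the frequency supremum is attained, so that passing to $\max_\omega$ and splitting ``norm reduced'' into the frequencywise statement \eqref{eq.ImprovCond} is legitimate; this is immediate from stability of the systems involved.
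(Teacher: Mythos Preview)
Your proposal is correct and follows essentially the same route as the paper: derive the pointwise telescoping identity $\bigSigma(G(\vK_{opt}^\kit,\jw)) = \epsilon_{opt}^\kit(\omega) - \epsilon_{init}^\kit(\omega) + \alpha^\kit(\omega) + \bigSigma(G(\vK_{init}^\kit,\jw))$ from the three definitions, then substitute it into the equivalent condition $\max_\omega \bigSigma(G(\vK_{opt}^\kit,\jw)) < \max_\omega \bigSigma(G(\vK_{init}^\kit,\jw))$. The paper's proof is even terser and omits your well-posedness remark about stability and attainment of the supremum, but the logical skeleton is identical.
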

	\begin{proof}
		Reduction of $\HinfNorm{G(\vK_{init}^\kit,s)}$ in an iteration is equivalent to the condition
		\begin{align}
		\| G(\vK_{opt}^\kit,\jw) \|_\infty &< \| G(\vK_{init}^\kit,\jw) \|_\infty \\
		\Leftrightarrow  \max_{\omega \in \R} \: \: \bigSigma(G(\vK_{opt}^\kit,\jw)) &< \max_{\omega \in \R} \: \: \bigSigma (G(\vK_{init}^\kit,\jw)). \label{eq.BasicCond4Succ}
		\end{align}
		Combining Equations~\eqref{eq.RedErr},~\eqref{eq.MatchErr}, and~\eqref{eq.AlphaOpt}, leads us to the following relation
		\begin{align}
		\label{eq.SigmaOpt2SigmaInit}
		\bigSigma(G(\vK_{opt}^\kit, \jw))  =&  \epsilon_{opt}^\kit(\omega) + \alpha^\kit (\omega) - \epsilon_{init}^\kit(\omega) \nonumber \\
		& +\bigSigma(G(\vK_{init}^\kit, \jw)).
		\end{align} 
		Inserting~\eqref{eq.SigmaOpt2SigmaInit} into~\eqref{eq.BasicCond4Succ}, we obtain~\eqref{eq.ImprovCond}. \end{proof}
Note that $\alpha^\kit (\omega)$ does not need to be $<$0 for all $\omega \in \R$. Rather, the following relation must be satisfied for successful $\Hinf$ norm minimization
\begin{align}
\max_{\omega \in \R} \: \: \alpha^\kit (\omega) + \bigSigma(\tG(\vtK_{init}^\kit,\jw)) < \max_{\omega \in \R} \: \:  \bigSigma(\tG(\vtK_{init}^\kit,\jw)). \label{eq.ReducedImprovCond}
\end{align}
Ideally, $\epsilon_{init}^\kit(\omega)$ and $\epsilon_{opt}^\kit(\omega)$ would be equal to 0 for all frequencies, i.e. no error is introduced during model reduction and model matching. This reduces~\eqref{eq.ImprovCond} to~\eqref{eq.ReducedImprovCond}, making optimization of the reduced and detailed models equivalent. However, in general, $\epsilon_{init}^\kit(\omega)$ and $\epsilon_{opt}^\kit(\omega)$ are non-zero, since model reduction and model matching are not exact. A consequence of~\eqref{eq.ImprovCond} is that non-zero error terms can even be beneficial for the success of the optimization. If $\epsilon_{opt}^\kit(\omega)$ is smaller than zero, and $\epsilon_{init}^\kit(\omega)$ is greater than zero, for all $\omega \in \R$,~\eqref{eq.ImprovCond} becomes less restrictive than~\eqref{eq.ReducedImprovCond}. However, this would mean that the reduced model $\tG$ needs to overestimate the detailed model $G$ for the initial parameters $\vK_{init}^\kit$ and $\vtK_{init}^\kit$, and vice-versa for the optimized parameters $\vK_{opt}^\kit$ and $\vtK_{opt}^\kit$. Finding a model and optimization procedure, which guarantees this property, is in general not possible. Hence, we adopt the strategy to find a model which achieves minimal error terms both before and after the optimization. 

\subsection{Evaluation of the improvement of the system $\Hinf$ norm}

To evaluate~\eqref{eq.ImprovCond}, $\epsilon_{init}^\kit(\omega)$ and $\epsilon_{opt}^\kit(\omega)$ need to be known. This is, however, challenging, as $G(\vK,s)$ is unknown and, thus,~\eqref{eq.RedErr} and~\eqref{eq.MatchErr} cannot be directly evaluated.	
This can be overcome by requiring that each SO sends the discrepancies at sampling frequencies to the SC
\begin{align}
\Delta_{i}(\omega)^\kit = G_i(\vK_{i,opt}^\kit, \jw) - \tG_i^\kit(\vtK_{i,opt}^\kit, \jw). \label{eq.ReductionError}
\end{align}
As the SC knows $\tG_i^\kit(\vtK_{i,opt}^\kit, \jw)$, the values of $G(\vK_{i,opt}^\kit,\jw)$ can be calculated combining~\eqref{eq.ReductionError} and~\eqref{eq.DetailedCoupledSys} for frequency samples, and consequently $\HinfNorm{G(\vK_{i,opt}^\kit,\jw)}$, without knowledge of the detailed structure of each $G_i$. The Condition~\eqref{eq.ImprovCond} can be directly used to evaluate whether the optimization was successful.

\begin{rem}
	The sampled values $\Delta_{i}(\omega)^\kit$ do not explicitly reveal detailed information about the structure and parameters of the subsystems. However, arguably, providing such sampling information decreases the data privacy.
	By using matrix norm inequalities, it is possible to find sufficient conditions for the norm improvement which require less data to be exchanged between the SOs and SC. One example is $\epsilon_{opt}^\kit (\omega)$. However, this introduces conservativeness to the approach, making it less applicable, whereas~\eqref{eq.ReductionError} allows us to check the norm-improvement without any conservativeness. Finding less conservative conditions with increased data privacy is a part of future research. 
\end{rem}

\subsection{Detailed algorithm for the proposed approach}

The overall proposed algorithm is shown in Fig.~\ref{fig.AlgDetailed}.
\begin{figure}[tb]
	\centering
	\includegraphics[width=1\columnwidth]{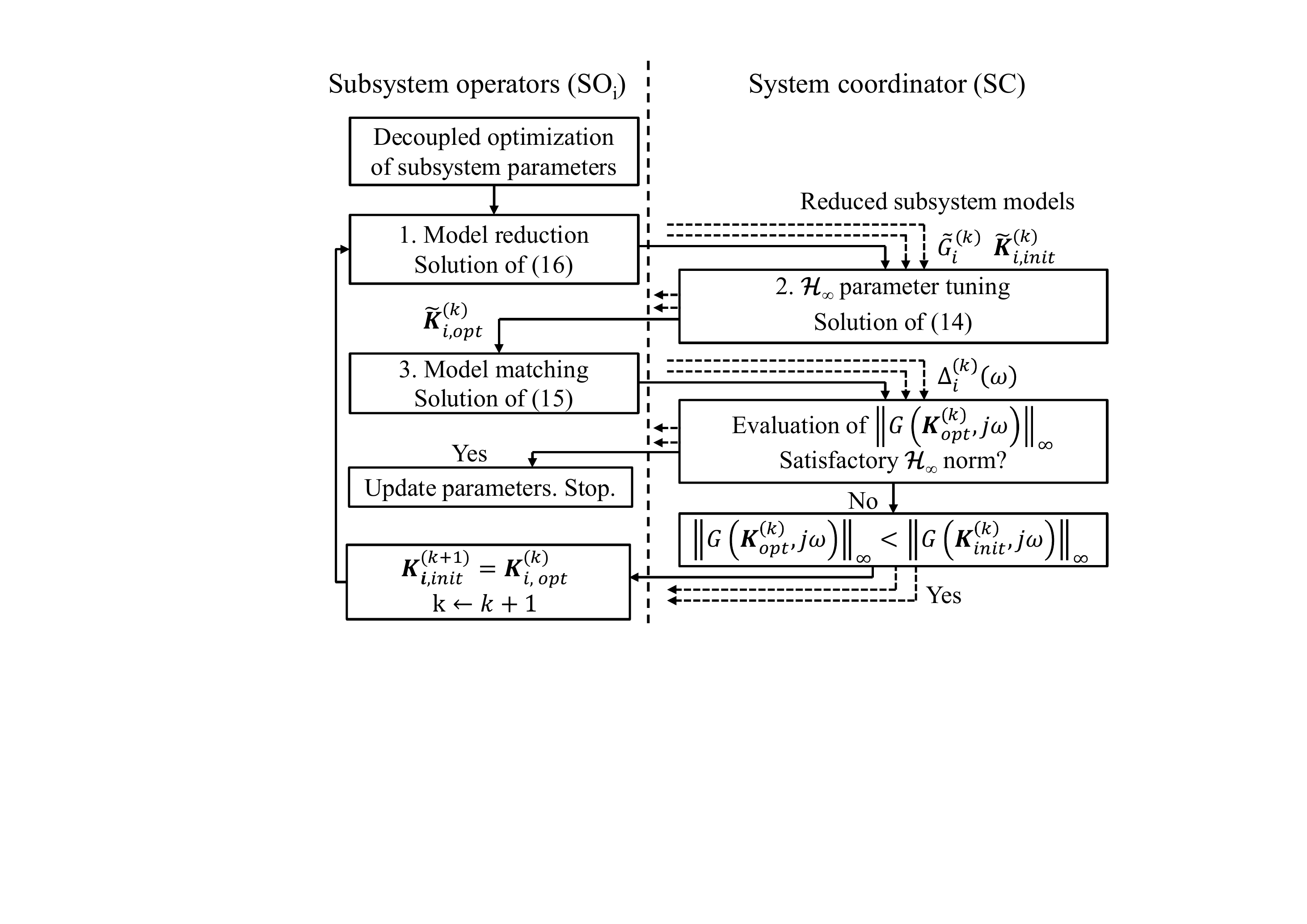}
	\caption{Detailed illustration of the proposed optimization approach.}
	\label{fig.AlgDetailed}
\end{figure}
Before the iterations, each SO first tunes its parameters in order to eliminate local oscillations. As described in Section~\ref{subsubsec.SubsysModel}, we use the transfer function $G^*_i(\vK_i,s)$, defined with~\eqref{eq.TFOmegas} for this purpose, where the coupling inputs are used as disturbance inputs for the optimization, in addition to the infeeds of static prosumers. Afterwards, the iterative procedure can start.

In each iteration, the SOs reduce their detailed area model $G_i(\vK_i,\jw)$ by optimizing the parameters of the structured reduced model $\tG_i'(\vR_i,\vtK_i,\jw)$ with~\eqref{eq.ModelReductionProblem}. The reduced models, together with the initial reduced parameter vectors $\vtK_{i,init}^\kit$ are then sent to the SC. The SC optimizes the reduced system model with~\eqref{eq.StructReducedOpt} and sends the optimized parameter vectors $\vtK_{i,opt}^\kit$ to the SOs. Each SO performs the model matching step by solving~\eqref{eq.ModelMatching}, and sends $\Delta_{i}^\kit(\jw)$ to the SC. 

The SC calculates $\HinfNorm{G(\vK_{opt}^\kit,\jw)}$ with~\eqref{eq.ReductionError} and~\eqref{eq.DetailedCoupledSys}.
Based on the result, the SC can make the decision to stop the algorithm if the results are satisfactory, or to continue with the next iteration.

\section{Application to the IEEE 68 bus example}
\label{sec.IEEE68bus}

\begin{figure}[tb]
	\centering
	\includegraphics[width=1\columnwidth]{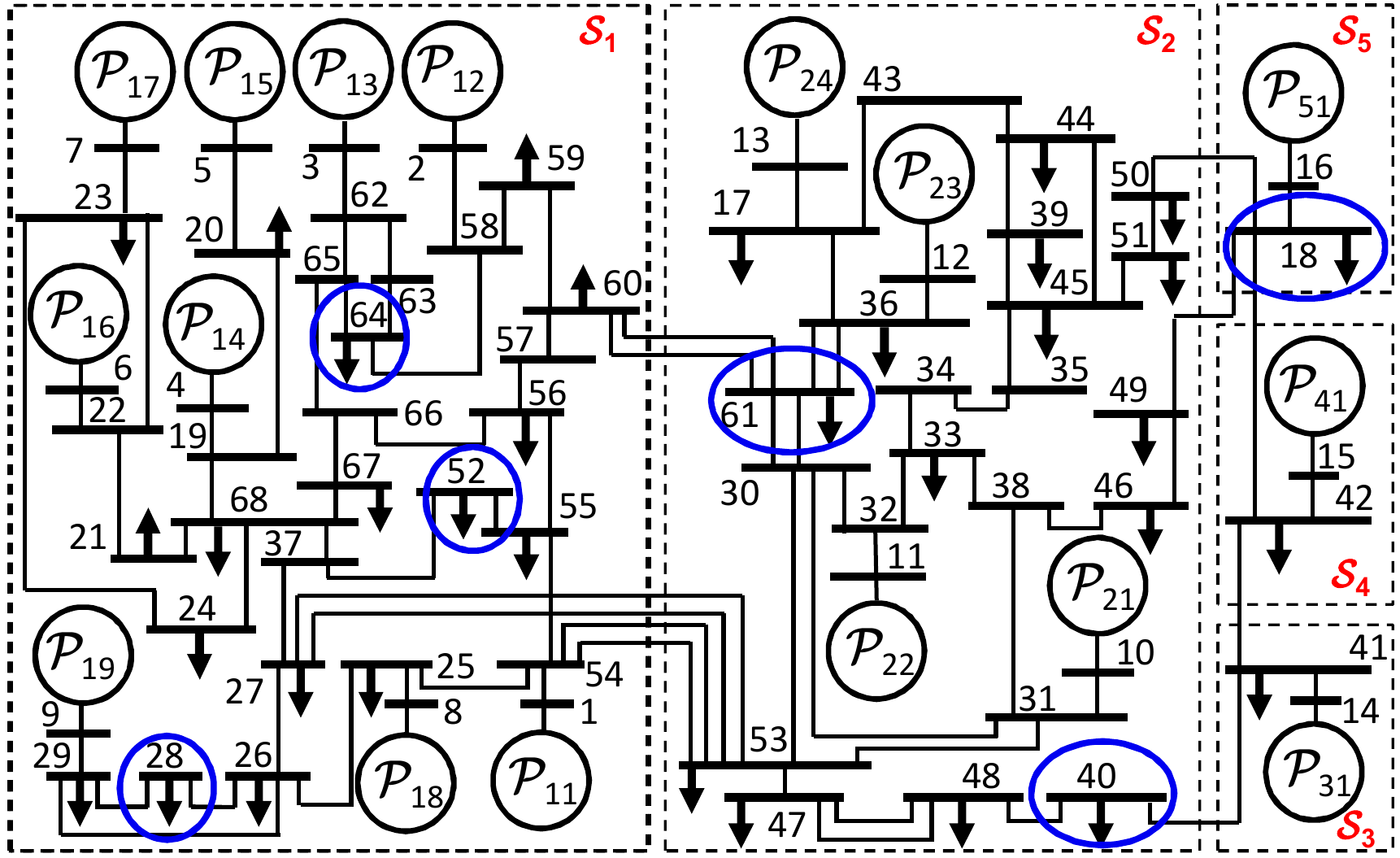}
	\caption{IEEE 68 bus model with 16 power plant prosumers, divided into 5 subsystems~\cite{singh2013ieee}. The disturbance inputs considered for the $\Hinf$ parameter tuning are marked in blue.}
	\label{fig.IEEE68}
\end{figure}
We consider the IEEE 68 bus power system~\cite{singh2013ieee} with 16 dynamic prosumers (power plants), shown in Fig.~\ref{fig.IEEE68}. The system consists of five subsystems $\cS_i$, i = 1...5, coupled with power lines, where the subsystems 3-5 are represented by reduced models. The parameters of the power grid and synchronous generators are provided in~\cite{singh2013ieee}. All generators are operated with standard IEEE controllers, see~\ref{App.IEEE68Models}.

In order to justify the motivation of online parameter tuning for power systems with high shares of renewable generation, we first apply the centralized tuning method, which also allows to show the effects of increased renewable penetration in power systems. Subsection~\ref{subsec.ApplicationDistributed}, results for the hierarchical algorithm. 

\subsection{Centralized tuning and the impact of renewables}
\label{subsec.TimeVarDynamics}
In power systems, oscillations with a damping ratio below 5\% are considered weakly dampened~\cite{dampratio}.
With the initial parameterization of the controllers, eight oscillatory modes show damping ratios below 5\%, see Table.~\ref{tab.InitOscillModes}. The step response of the system to a 100 MW load step in bus 52 is shown in Fig.~\ref{fig.IEEE68Init}. The simulation is done with a linearized system model, which has shown good accuracy in previous works~\cite{Mesanovic18ACC, mesanovic2018optimalparameter,Mesanovic17ISGT,mesanovic2018ISGT}. The linear system has 280 states and 160 controller parameters with the initial parameter vector $\vK_{init}$.

\begin{table}[tb]
	\centering
	\caption{Modes of the IEEE 68 bus system which have damping ratios below 5\% for the initial controller parameterization.}
	\label{tab.InitOscillModes}
	\begin{tabular}{c c c}
		\toprule
		mode & frequency (rad/s) & damping ratio (\%) \\ \midrule
		1   & 3.8               & 0.5                \\
		2   & 2.5               & 1.7                \\
		3   & 3.5               & 2.3                \\
		4   & 6.3               & 2.9                \\
		5   & 5                 & 4.3                \\
		6   & 7.5               & 4.4                \\
		7   & 6.3               & 4.5                \\
		8   & 8.2               & 4.6                \\ \bottomrule
	\end{tabular}
\end{table}
\begin{figure}[tb]
	\centering
	\includegraphics[width=1\columnwidth]{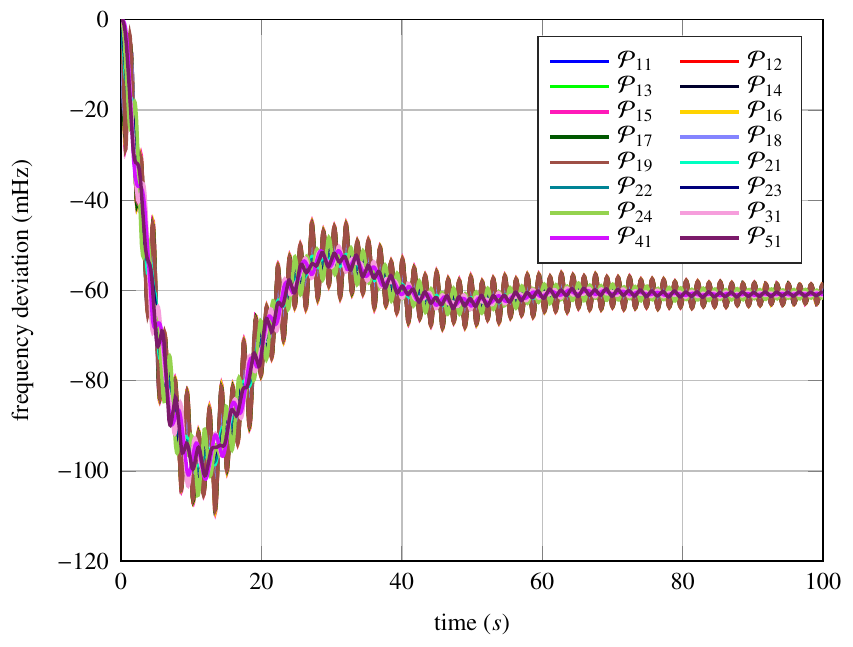}
	\caption{Frequency response for a 100 MW load step in bus 52. Power plants in subsystems 2 and 3 oscillate against areas 1 and 5.}
	\label{fig.IEEE68Init}
\end{figure}
\begin{figure}[tb]
	\centering
	\includegraphics[width=1\columnwidth]{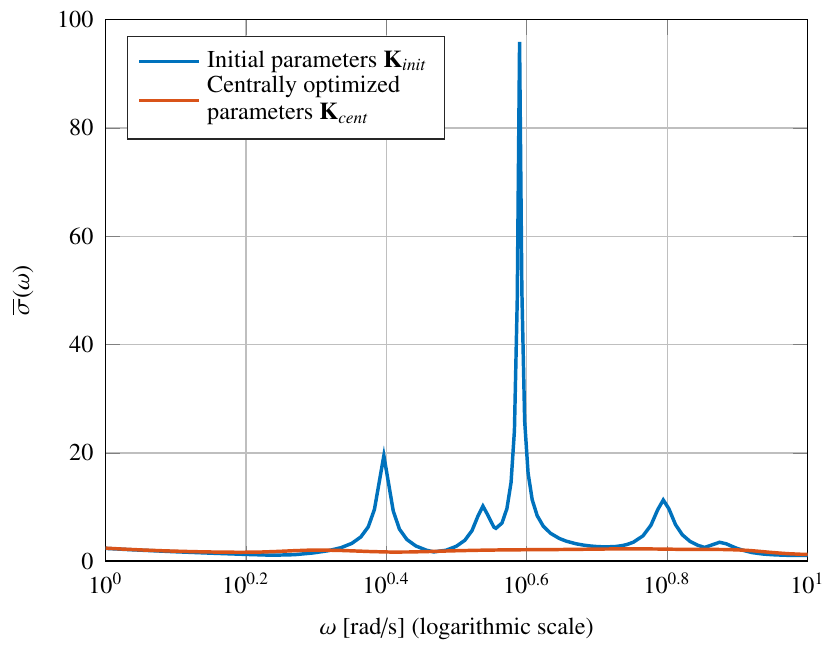}
	\caption{Largest singular value of the IEEE 68 bus system as a function of frequency $\omega$. After centralized optimization with~\eqref{eq.FiniteOptProb}, the resonant peaks in the system are practically eliminated.}
	\label{fig.IEEE68Sigma}
\end{figure}
\begin{figure}[tb]
	\centering
	\includegraphics[width=1\columnwidth]{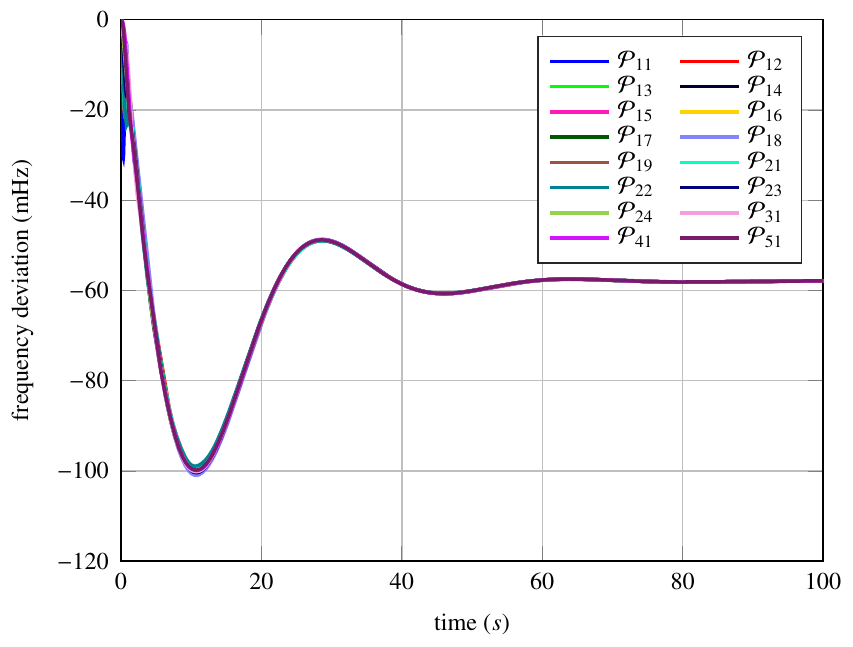}
	\caption{Frequency response for a 100 MW load step in bus 52 for the centrally tuned parameters $\vK_{cent}$.}
	\label{fig.IEEE68Opt}
\end{figure}

For the centralized tuning approach presented in Section~\ref{sec.CentTuning} we assume that the transfer function of the entire system is completely known. The solution of the centralized problem will serve as a baseline for the subsequent hierarchical optimization.
For the optimization, we use the Matlab toolbox YALMIP~\cite{Yalmip}, together with the solver SeDuMi~\cite{Sedumi}, and the transfer function $G^*$ from~\eqref{eq.TFOmegas}.
Disturbance inputs are the active power infeeds of the static prosumers, see Fig.~\ref{fig.IEEE68}.
After tuning, all oscillatory modes are sufficiently dampened with a damping  ratio above 7\%. We denote the centrally tuned parameter vector of all prosumers with $\vK_{cent}$.
The system $\Hinf$ norm was reduced by 97.5\%. Figure~\ref{fig.IEEE68Sigma} shows the largest singular values of the system for the initial and optimized parameters. The largest values of the curves in Fig.~\ref{fig.IEEE68Sigma} represent the $\Hinf$ norm with the initial and optimized parameters. 
The largest peak at approx. 3.8 rad/s corresponds to the oscillatory mode with the poorest damping ratios in Table~\ref{tab.InitOscillModes}. With $\vK_{cent}$, the peak is completely eliminated, which is also visible in Fig.~\ref{fig.IEEE68Opt}.

When the amount of renewable generation in the system increases, conventional prosumers, such as power plants, will be disconnected from the grid to prevent overproduction. In the considered power system, power plants already represent aggregated models of multiple smaller power plants, thus, the disconnection of smaller power plants is modeled by reducing the nominal power of the power plants $\cP_{ij}$.
The power infeed of the power plant is also reduced and shifted to static prosumers in the same bus, i.e., we model renewable generation in this system as static prosumers~\cite{poolla2019placement,pddotnuschel2018frequency}. 

To simulate the effects of large scale renewable generation, we consider two scenarios.
The first scenario considers increased renewable integration in $\cS_3$. For this purpose, we reduce the power of $\cP_{31}$ to 15\% of its original value and assume that power plants with PSSs are disconnected. In the second scenario, we consider increased renewable generation in $\cP_{22}$. We rescale the power of $\cP_{22}$ to 50\% of its original value and deactivate PSS$_{22}$. Details of the scenarios are presented in Table~\ref{tab.ScenariosAndModes}. In both scenarios, the damping becomes worse for the initial parameters $\vK_{init}$. Even though $\vK_{cent}$ eliminates weakly dampened eigenmodes for the initial, nominal, scenario, weakly dampened oscillations still emerge in the other two scenarios. New parameter sets are needed to improve the oscillation damping for these scenarios, demonstrating the necessity for online adaptation of parameters to counteract the change.

\begin{table}[tb]
	\centering
	\caption{Frequency and damping of three weakest-dampened modes, which have damping ratios below 5\%, for the considered scenarios and controller parameters.}
	\label{tab.ScenariosAndModes}
	\begin{tabular}{c c c}
		\toprule
		Scenario      & $\vK_{init}$                                                           & $\vK_{cent}$                                                      \\ \midrule
		Initial      & \makecell{3.9 rad/s, 0.5\% \\ 2.51 rad/s, 1.7\% \\ 6.28 rad/s, 2.9\%}  &                                                                   \\
		$\cP_{31}$ scaled & \makecell{3.83 rad/s, 0.2\% \\ 2.7 rad/s, 1.4\% \\ 6.28 rad/s, 2.9\%}  & 4.02 rad/s, 3.6\%                                                 \\
		$\cP_{22}$ scaled & \makecell{3.9 rad/s, 0.4\% \\ 2.45 rad/s, 1.7\%  \\ 6.28 rad/s, 2.9\%} & \makecell{2.51 rad/s, 3\%\\ 3.46 rad/s, 3\% \\ 4.9 rad/s, 3.7\% } \\ \bottomrule
	\end{tabular}
\end{table}


\subsection{Hierarchical data privacy conserving $\Hinf$ parameter tuning}
\label{subsec.ApplicationDistributed}


We now apply the proposed hierarchical data privacy preserving tuning method on the IEEE 68 bus power system. Note that as subsystems $\cS_3$ - $\cS_5$ are already replaced with reduced models, the reduction and matching step is not necessary for these systems.
\begin{figure}[tb]
	\centering
	\includegraphics[width=1\columnwidth]{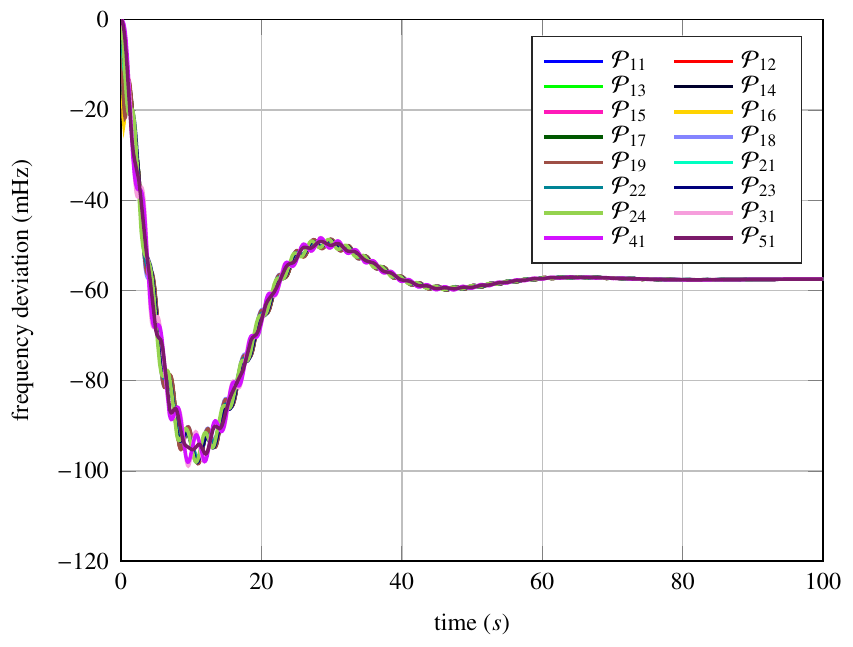}
	\caption{Frequency response of the detailed model to a 100 MW load step in bus 52 for completely decentrally optimized parameters $\vK_{dcp}$.}
	\label{fig.SeparateOptDetailed}
\end{figure}

As shown in Fig.~\ref{fig.AlgDetailed}, each SO first tunes the parameters of its controllers without regarding other systems by using $G_i^*(\vK_i, s)$. We consider thereby the coupling inputs as disturbances. Figure~\ref{fig.SeparateOptDetailed} shows the step response of the detailed coupled system after this step. We denote this (decoupled) parameterization of each subsystem with $\vK_{dcp} = \myvec{i}(\vK_{dcp,i})$. As can be seen, the step response was significantly improved compared to the response with initial parameters in Fig.~\ref{fig.IEEE68Init}. However, it is still worse than the centrally tuned parameterization $\vK_{cent}$ in Fig.~\ref{fig.IEEE68Opt}, and three weakly dampened modes still remain, as summarized in Table~\ref{tab.ModesAfterSeparateOpt}. Thus, even though each SO eliminated the oscillations within the subsystem, oscillations betweeen the subsystems could not be eliminated without consideration of the coupling between the subsystems.

\begin{table}[tb]
	\centering
	\caption{Frequency and damping ratios of weakly-dampened modes in the coupled system when each SO optimizes its parameters separately, i.e. with $\vK_{dcp}$.}
	\label{tab.ModesAfterSeparateOpt}
	\begin{tabular}{c c c}
		\toprule
		Mode & frequency (rad/s) & damping ratio (\%) \\ \midrule
		1   & 2.51              & 2.6          \\
		2   & 3.52              & 3.8          \\
		3   & 4.9               & 4.4          \\ \bottomrule
	\end{tabular}
\end{table}

In order to improve the step response, we apply the described optimization steps in the next sections.

\subsection{Structured model reduction}
\label{subsec.StructRed4PowSys}



Current state of the art approaches for structured model reduction are not suitable for our approach.
First, the approaches require the exchange of detailed parameters/data of the subsystems, which violates our goal of data privacy.
Second, unstructured models are often generated, which do not allow to retain insight into what parameters can be tuned to minimize the $\Hinf$ error between the reduced and detailed model.

For this reason, we introduce a hybrid approach for distributed model reduction for power systems. It consists of two steps. In the first step, we create an equivalent model of a synchronous generator (SG) from a group of SGs in $\cS_i$. As equivalent models of SGs are well studied in the literature, we use the analytical procedure from~\cite{chow1995inertial} for this step. 


In the second step, we parameterize of equivalent controllers for the SG. This challenge, however, was not intensively studied.
Thus, to parameterize the controllers of the equivalent SG, i.e. TGOV$_{ij}$, AVR$_{ij}$, and PSS$_{ij}$, we use~\eqref{eq.ModelReductionProblem}. We first choose models for the controllers occurring in the system with the highest frequency. In case of the IEEE 68 bus system, those are the controllers in Figs.~\ref{fig.TGOV},~\ref{fig.Exciter} and~\ref{fig.PSS} in~\ref{App.IEEE68Models}. The tunable controller parameters, which are part of the vector $\vtK_{i}$, are marked red in the figures, whereas all other parameters are a part of the vector $\vR_i$.

In case of the IEEE 68 bus system, subsystems $\cS_1$ and $\cS_2$ are each replaced by one equivalent power plant, which has shown to be sufficient. Figures~\ref{fig.ReductionS1} and~\ref{fig.ReductionS2} show the largest singular values of the reduced and detailed model of $\cS_1$ and $\cS_2$, respectively, before and after the parameterization procedure. They show a very good match between the detailed models and the reduced models after the parameterization.

\begin{figure}[tb]
	\centering
	\includegraphics[width=0.95\columnwidth]{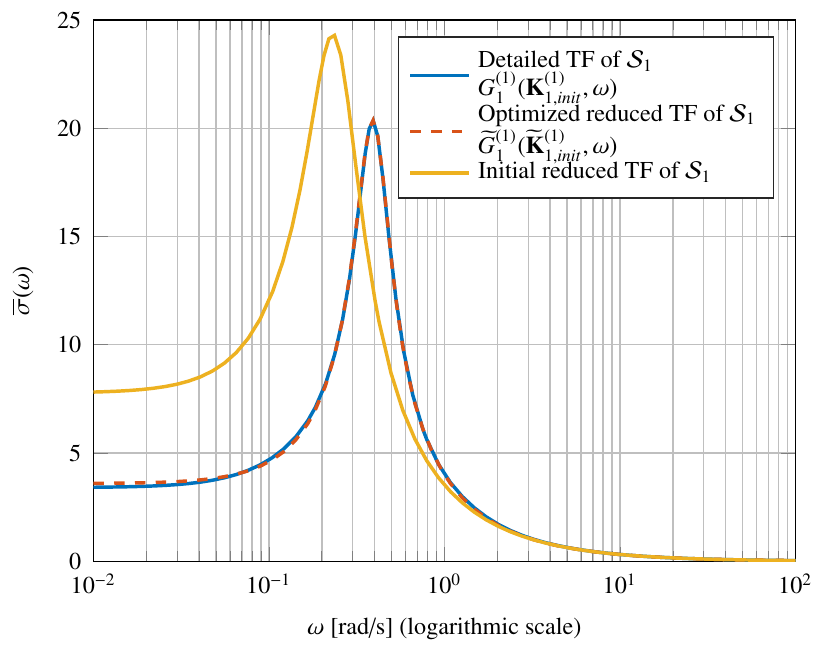}
	\caption{Largest singular values of the detailed and reduced model of $\cS_1$. Note that the blue curve is mostly covered by the red curve.}
	\label{fig.ReductionS1}
\end{figure}
\begin{figure}[tb]
	\centering
	\includegraphics[width=0.95\columnwidth]{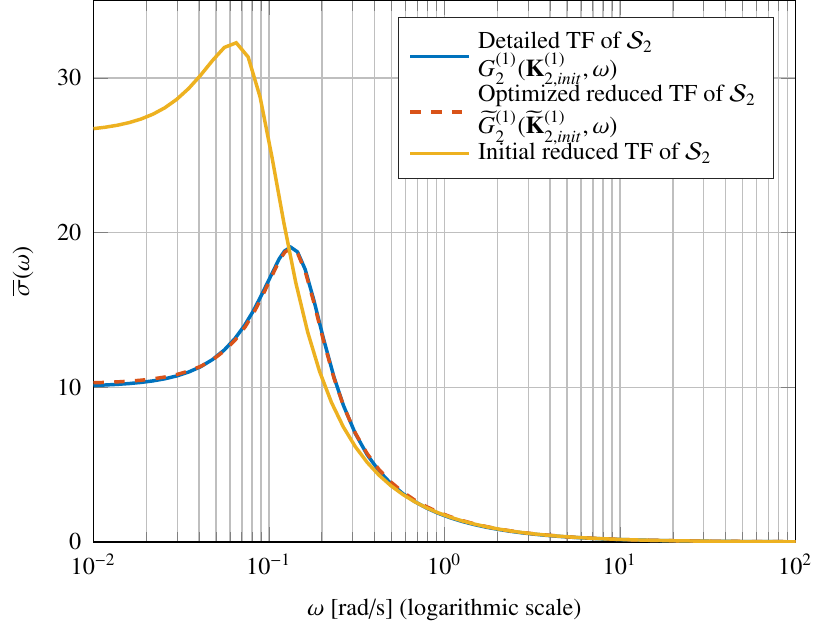}
	\caption{Largest singular values of the detailed and reduced model of $\cS_2$, showing a good match. Note that the blue curve is mostly covered by the red curve.}
	\label{fig.ReductionS2}
\end{figure}

\subsection{Structured $\Hinf$ optimization for the overall reduced system}

The SC lumps the reduced subsystem models into $\tG(\vtK, s)$. The reduced system for the IEEE 68 bus model is depicted in Fig~\ref{fig.ReducedSys}. The static infeeds, marked blue, are used as disturbance inputs, and are elements of the reduced vector of disturbances $\vtw_S$.
A system with 89 states and 50 optimization parameters is obtained, whereas the detailed model has 280 states and 160 controller parameters. 
We denote the initial parameter vector of the detailed and the reduced model by $\vK_{init}^{(1)}$ and $\vtK_{init}^{(1)}$, respectively.
The step response of the reduced system to a 100 MW load step in bus 2 is shown in Fig.~\ref{fig.StepResponseReduced}. The oscillations in the coupled system are less dampened than with $\vK_{init}^{(1)}$ in the detailed model, c.f. Fig.~\ref{fig.SeparateOptDetailed}. However, a 100\% accuracy of the reduced model is not required for the approach to be successful, as described in Section~\ref{subsec.ImprovCond}. 
\begin{figure}[tb]
	\centering
	\includegraphics[width=0.5\columnwidth]{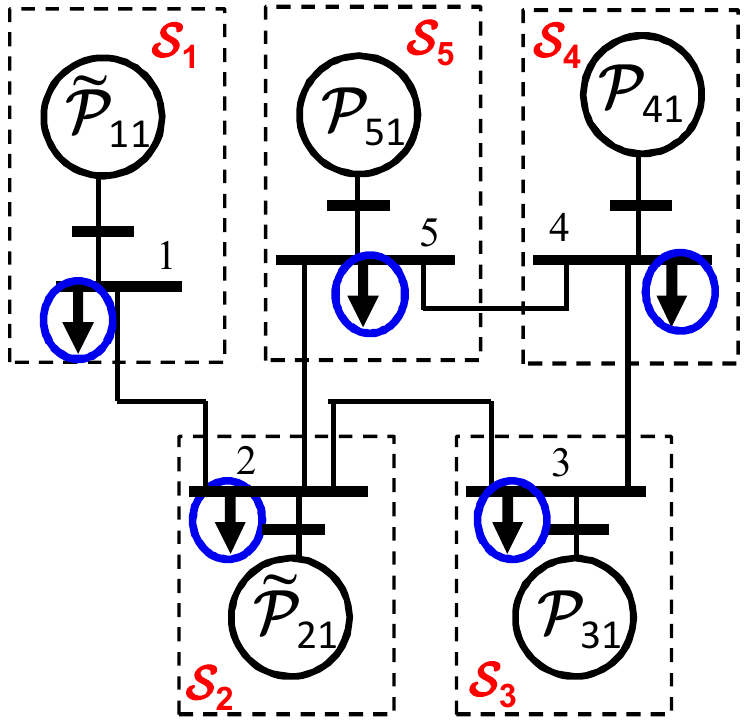}
	\caption{Reduced model of the IEEE 68 bus power system. Disturbance inputs for the optimization are marked blue.}
	\label{fig.ReducedSys}
\end{figure}
\begin{figure}[tb]
	\centering
	\includegraphics[width=1\columnwidth]{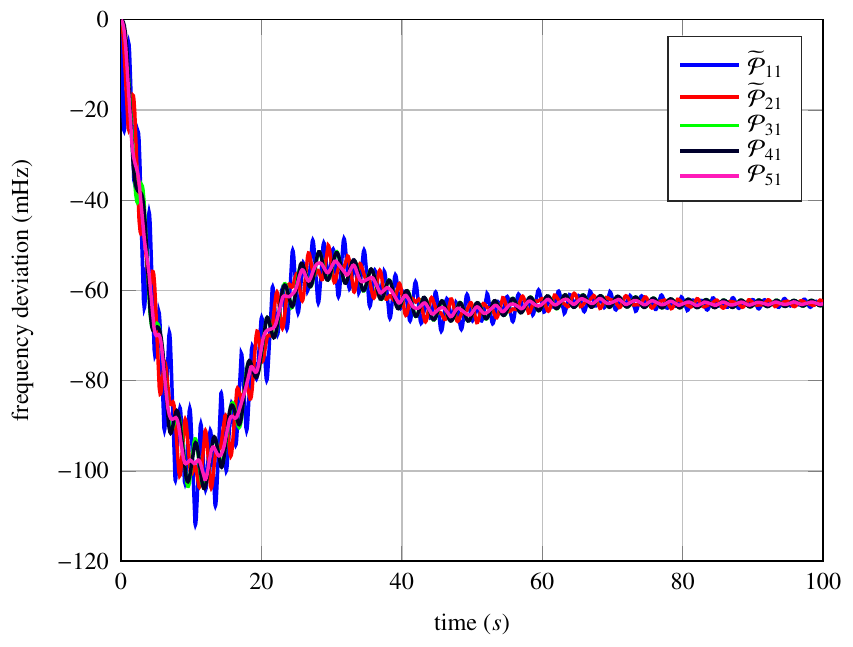}
	\caption{Step response of the reduced model in Fig.~\ref{fig.ReducedSys} in the first iteration $\tG^{(1)}(\vtK_{init}^{(1)},\omega)$ to a 100 MW load step in bus 2 with  $\vtK_{init}^{(1)}$.}
	\label{fig.StepResponseReduced}
\end{figure}

The SC optimizes the reduced model using~\eqref{eq.StructReducedOpt}, and obtains the optimized parameter vector $\vtK_{opt}^{(1)}$. 

\subsection{Model matching}

In the last step, the SOs optimize with~\eqref{eq.ModelMatching} the parameters of the detailed subsystem models to match the reduced model. Figures~\ref{fig.MatchingS1} and~\ref{fig.MatchingS2} show the results of the model matching step for $\cS_1$ and $\cS_2$. For $\cS_2$, model matching could not decrease the difference between the detailed and reduced model.
However, the error can be reduced by relaxing the box constraints of controller parameters in $\cS_2$ or by making the box constraints for $\vtK_{2}$ in the reduced model tighter, which was not necessary, as the obtained coupled system norm is sufficiently good even after one iteration.

Counter-intuitively, the reduced reference models for $\cS_1$ and $\cS_2$ have lager $\Hinf$ norms than the initial models. This underpins that $\Hinf$ optimization of the decoupled subsystems does not necessarily minimize the $\Hinf$ norm of the coupled system. This is also evident from the optimization results shown subsequently in Fig.~\ref{fig.SigmaAll}, in which the optimized coupled (detailed and reduced) systems have a lower $\Hinf$ norm than the initial respective systems, even though the $\Hinf$ norm of the decoupled subsystems increased.

\begin{figure}[tb]
	\centering
	\includegraphics[width=1\columnwidth]{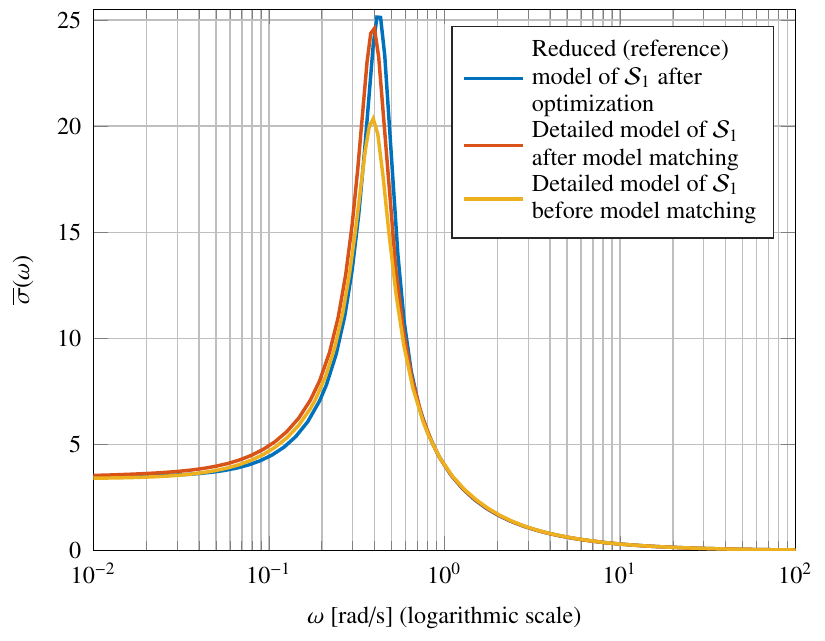}
	\caption{Largest singular values of the detailed and reduced system for $\cS_1$ before and after the model matching step.}
	\label{fig.MatchingS1}
\end{figure}

\begin{figure}[tb]
	\centering
	\includegraphics[width=1\columnwidth]{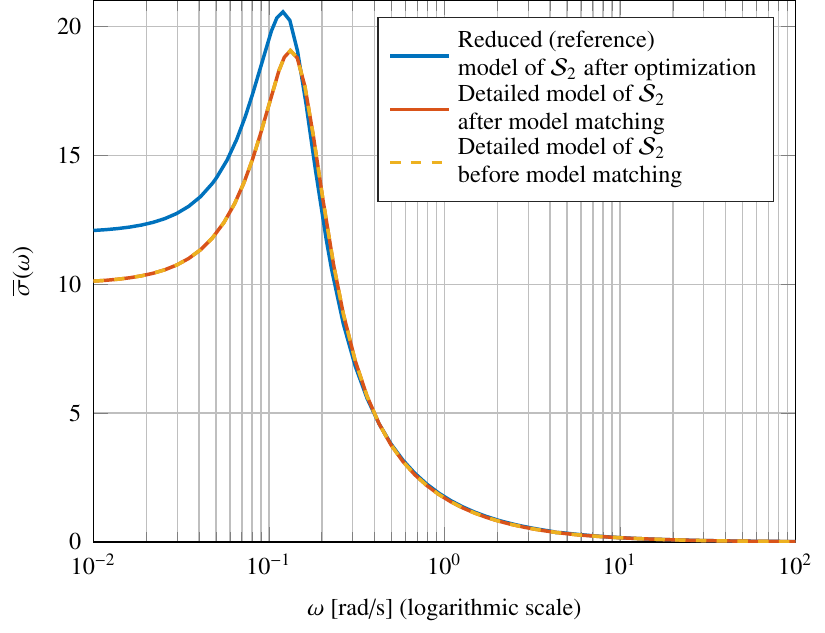}
	\caption{Largest singular values of the detailed and reduced system for $\cS_2$ before and after the model matching step. Note that the red curve is under the yellow curve.}
	\label{fig.MatchingS2}
\end{figure}

\subsection{Results of hierarchical $\Hinf$ controller tuning}


After the model matching step, the parameter vectors $\vK_{i,opt}^{(1)}$ are obtained for each subsystem $\cS_i$. The time response of the detailed system, with the parameterization $\vK_{opt}^{(1)} = \myvec{i}(\vK_{i,opt}^{(1)})$, to a 100 MW load step in bus 52 is shown in Fig.~\ref{fig.OptimizedReducedDetailed} with solid lines. The same figure also shows the optimized step response of the reduced model with $\vtK_{opt}^{(1)}$ in dashed lines, showing a very good correspondence of the two models.
Furthermore, the response looks almost identical as the results with centralized tuning in Fig.~\ref{fig.IEEE68Opt}. With $\vK_{opt}^{(1)}$, all weakly-dampened modes are eliminated from the system as well. The system $\Hinf$ norm was reduced by 97.5\%, which was also the case with $\vK_{cent}$. This could not be achieved with $\vK_{dcp}$, when all SOs tuned their controller parameters separately.

Figure~\ref{fig.SigmaAll} shows the largest singular values of the detailed and reduced system in the relevant frequency range for the various parameterizations. With the decoupled parameterization $\vK_{dcp}$, a large peak is still present in the system at approx 2.5 [rad/s]. The reduced model with $\vtK_{init}^{(1)}$ is able to recreate this peak, and introduces an additional peak at approx. 4 Hz. However, 100\% accuracy of the reduced model is not necessary for the approach to be successful. Both peaks in the reduced model are eliminated with $\vtK_{opt}^{(1)}$. The resulting parameterization of the detailed model $\vK_{opt}^{(1)}$ with the decentralized approach achieves approximately the same results as the centralized parameterization $\vK_{cent}$.

\begin{figure}[tb]
	\centering
	\includegraphics[width=1\columnwidth]{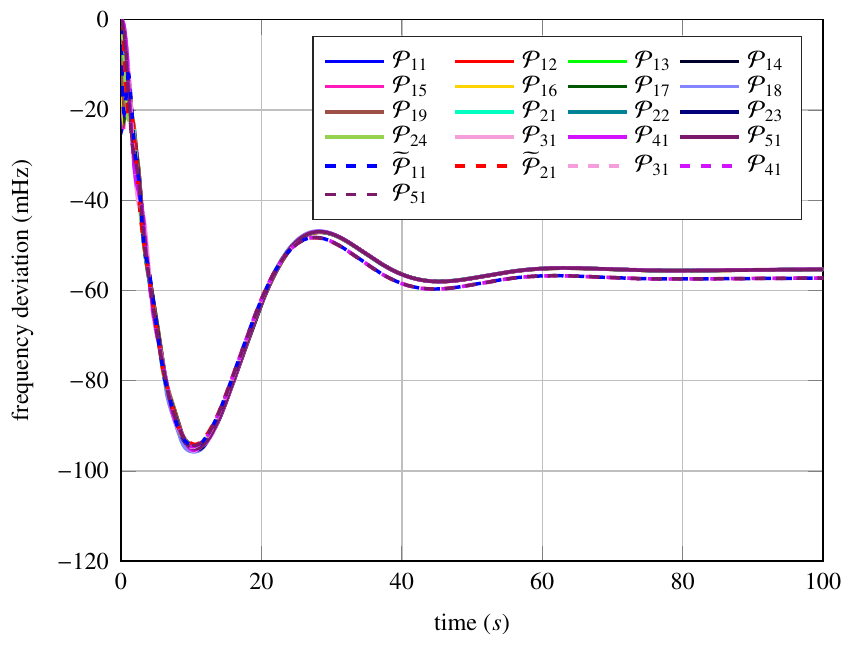}
	\caption{Step response of the optimized detailed and reduced IEEE 68 bus model to a 100 MW load step. Solid lines represent simulation of a 100 MW load step in bus 52 with the detailed model and  $\vK_{opt}^{(1)} = \myvec{i}(\vK_i^{(1)})$, whereas dashed lines represent simulation of a 100 MW load step in bus 2 with the reduced model and  $\vtK_{opt}^{(1)}$.}
	\label{fig.OptimizedReducedDetailed}
\end{figure}

\begin{figure}[tb!]
	\centering
	\includegraphics[width=1\columnwidth]{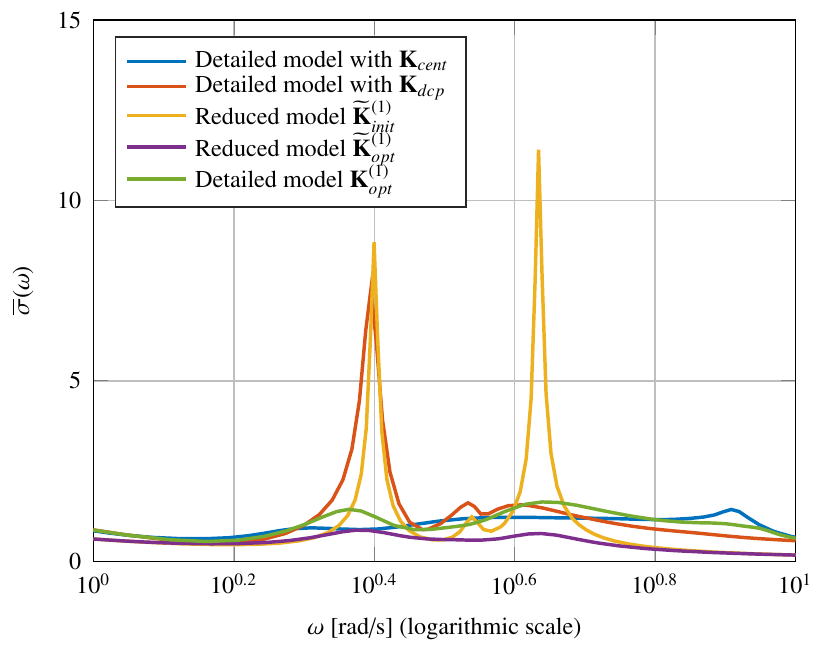}
	\caption{Largest singular values of the reduced and detailed models of the IEEE 68 bus system achieved with the various parameterizations.}
	\label{fig.SigmaAll}
\end{figure}



\section{Tuning for an artificial large scale power system}
\label{sec.ApplicationLarge}

To show the applicability of the proposed approach on a large and complex power system, we couple nine IEEE 68 bus power systems, as shown in Fig.~\ref{fig.LargeSystem}. The northern connection in each subsystem is made with bus 43, the southern connection with bus 48, the eastern with 42, and western with 21.
This system is strongly meshed, and has 2520 states and 1440 controller parameters, making centralized optimization impossible.

\begin{figure}[tb]
	\centering
	\includegraphics[width=1\columnwidth]{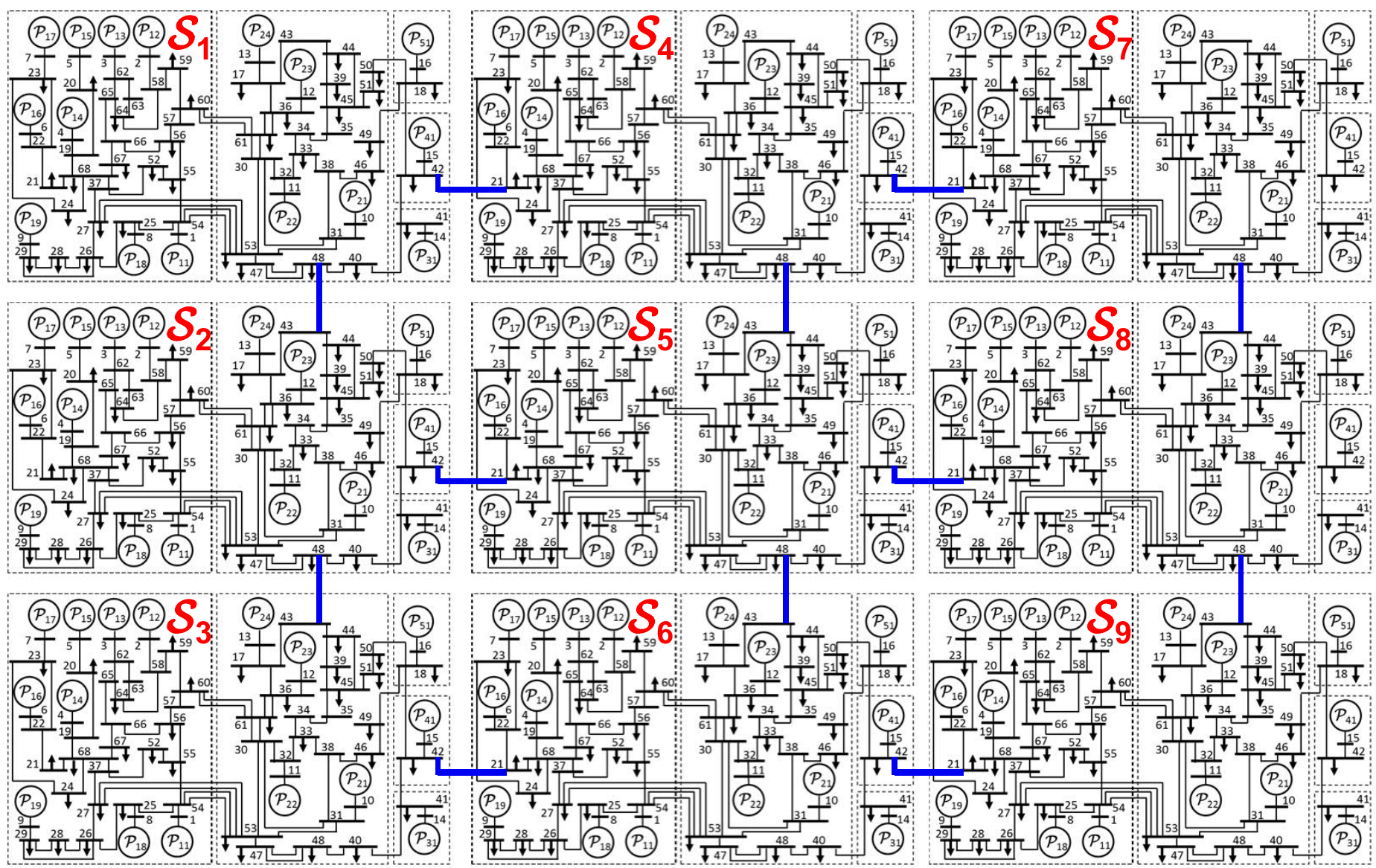}
	\caption{Large synthetic power system obtained by coupling nine IEEE 68 bus systems. The system consists of 144 power plants, 2520 dynamic states and 1440 controller parameters.}
	\label{fig.LargeSystem}
\end{figure}

\begin{figure}[tb]
	\centering
	\includegraphics[width=1\columnwidth]{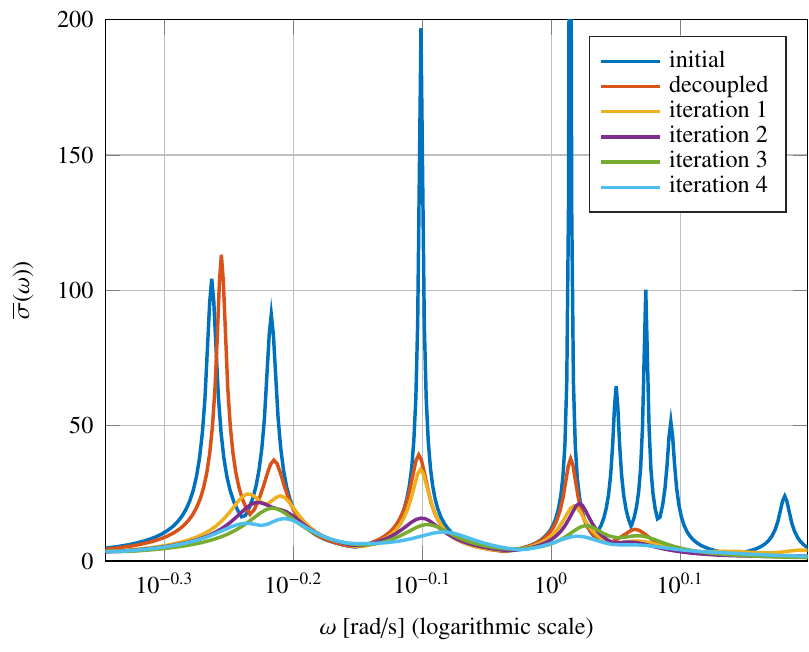}
	\caption{Largest singular values of the large system in Fig.~\ref{fig.LargeSystem} with the initial parameterization, decoupled parameterization, when each SO optimizes its parameters separately, and after the model matching step in each iteration of the optimization. The highest peak of the plot with initial parameterization, corresponding to the value 360 at approximately 1 rad/s, is cut-off to improve visibility of other plots.}
	\label{fig.BisSys_sigma}
\end{figure}

%

\begin{figure}[tb]
	\centering
	\includegraphics[width=1\columnwidth]{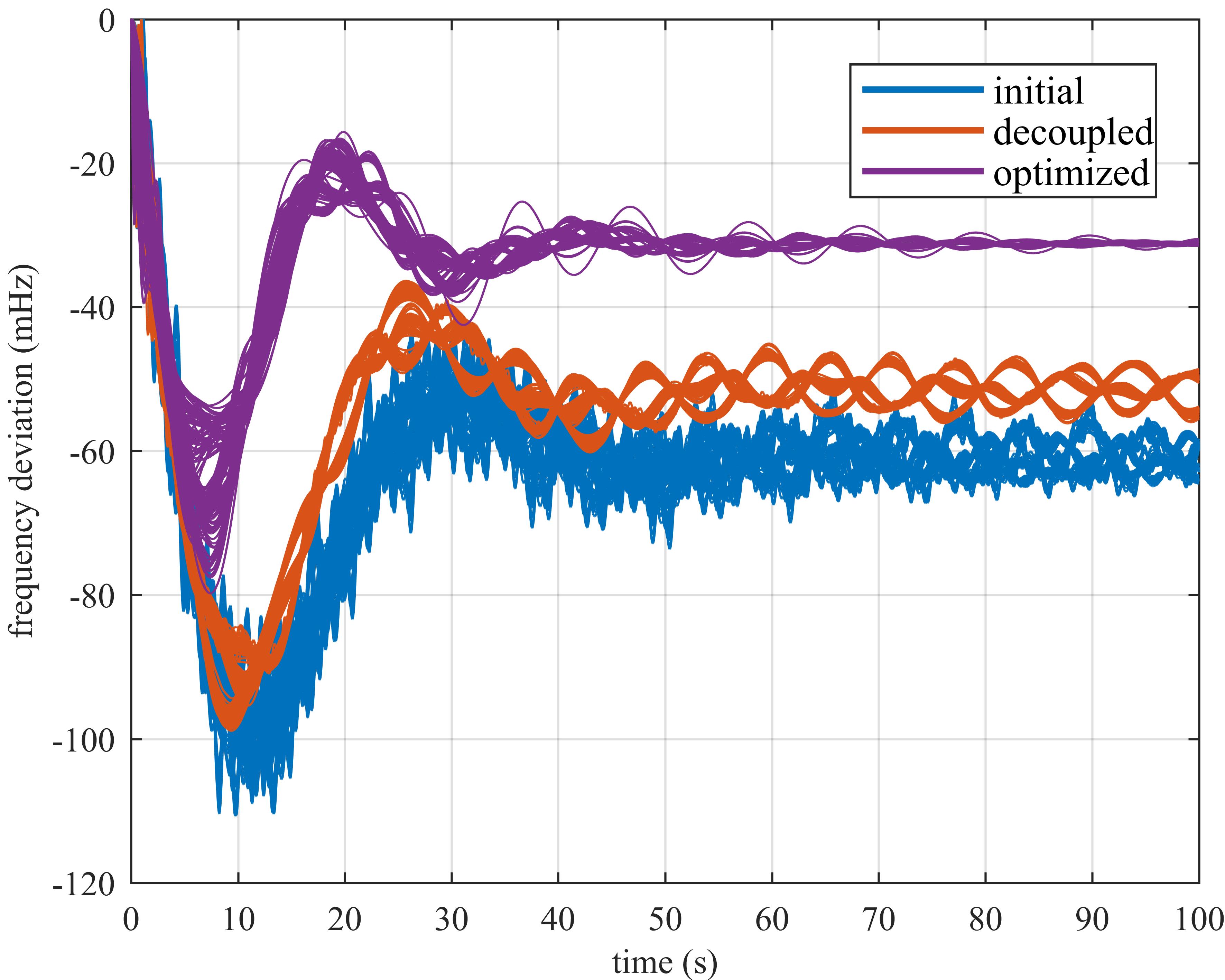}
	\caption{Frequency response of the synthetic power system to 100 MW load steps in 10 buses with the initial, decoupled, and optimized parameterization. Due to the large amount of dynamic prosumers, each parameterization is represented with one color.}
	\label{fig.BigSysStepResponse}
\end{figure}

For the hierarchical tuning, we replace each subsystem $\cS_i$ with a single-generator equivalent model in the model reduction step, analogously as in the previous example. 
Thus, in the optimization step, a system with 171 states and 90 controller parameters is optimized, instead of the original 2520 states and 1440 parameters, which is impossible to handle.

Figure~\ref{fig.BisSys_sigma} shows the singular value plot of the system with the initial parameterization and subsequent results. The largest peak, corresponding to the value 360 at approximately 1 rad/s, is cut-off to improve visibility of other plots.
With the proposed approach, the $\Hinf$ norm of the system was reduced by 95\% in four iterations. Thereby, the resonant peaks are almost eliminated. The decoupled parameterization, i.e. when each SO tunes its respective controller parameters separately, is also shown Fig.~\ref{fig.BisSys_sigma}. Even though the $\Hinf$ norm of the system is significantly improved compared to the initial parameterization, a large peak still remains.

The time-domain response of the system to load steps in 10 buses with the initial, decoupled, and optimized parameterization after four iterations, is shown in Fig.~\ref{fig.BigSysStepResponse}. With the decoupled parameterization, large oscillations still prevail at the end of the time horizon. On the other hand, the oscillations are significantly better dampened with the proposed approach. This shows that the presented approach is capable of optimizing such large and complex systems and can lead to very good results.

This results were achieved using a Windows computer with an Intel$^\circledR$ i7-4810MQ CPU running at 2.8 GHz. The model reduction step requires approx 30s on average for one subsystem. The optimization of the reduced system requires approx. 1 minute, whereas the model matching step requires less than 5 minutes for each subsystem due to the larger number of optimization parameters. Consequently, assuming that the model reduction and model matching steps for each subsystem are executed in parallel, approx. 7 minutes are needed for one iteration of the algorithm. Thus, even such large systems can be optimized in less than 30 minutes, depending on the number of iterations until a satisfactory system norm is achieved. Note that further subsystems can be coupled without increasing the computation time significantly, as the model reduction and model matching step are done in parallel.


\section{Conclusions and outlook}
\label{sec.Conclusion}

Tuning of controller parameters in power systems increases system resiliency when power system dynamics are constantly changing, e.g. due to an increasing share of renewable generation. 
The tuning, however, can be particularly challenging for two reasons: (a) the power systems can be very large, and (b) large power systems often belong to a multitude of subsystem operators which are not willing to exchange detailed information about their subsystems.
We proposed an algorithm for hierarchical data privacy conserving structured $\Hinf$ controller synthesis for power systems. The approach addresses both of these challenges it demonstrated its effectiveness with two simulation examples. 
It is based on the exchange of structured reduced models of subsystems, which conserves data privacy and reduces computational complexity. For this purpose, methods for model reduction and model matching are proposed.
In the first simulation example, we demonstrated how increased percentage of renewable generation leads to changing dynamics in the system, showing the need for online reparameterization of controllers. Additionally, we showed in the example that weakly-dampened oscillatory modes in the overall system cannot be eliminated if all subsystems optimize their parameters separately. The proposed hierarchical approach eliminates the weakly dampened modes in the first numerical example, and it achieves approximately the same results as with centralized tuning. In the second example, we showed the scalability and efficacy of the approach on an even larger power system, where centralized tuning cannot be applied, thereby practically eliminating oscillations from the system.


\appendix

\section{Controller models used for the IEEE 68 bus example}
\label{App.IEEE68Models}

\begin{figure}[t]
	\centering
	\includegraphics[width= 1\columnwidth]{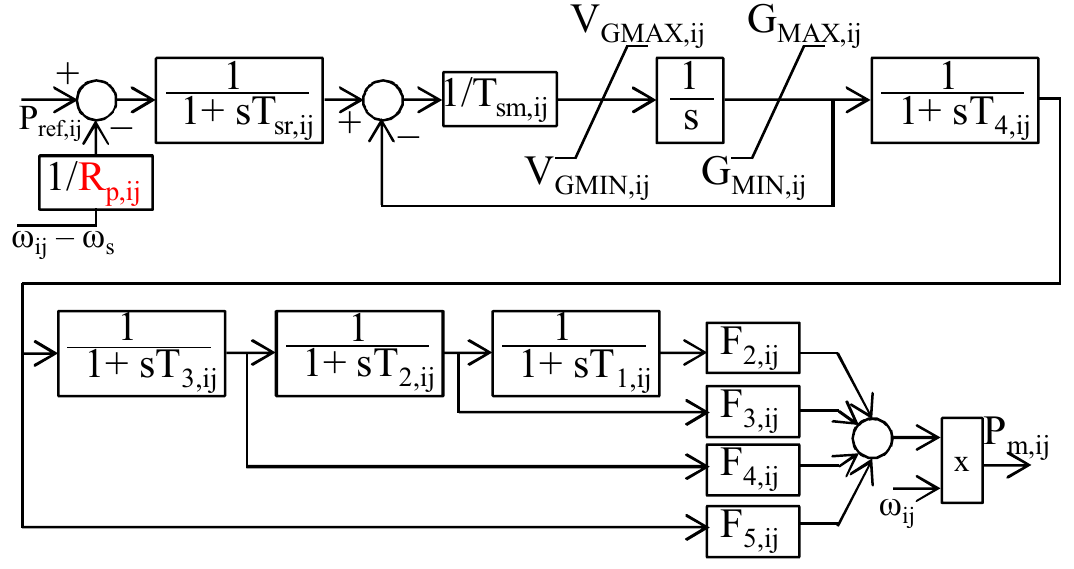}
	\caption{Dynamic model of the turbine and governor from~\cite{TGOVMathworks}. The frequency droop gain of the governor $R_{p,ij}$ is an optimization variable.}
	\label{fig.TGOV}
\end{figure}
\begin{figure}[t]
	\centering
	\includegraphics[width= 0.7\columnwidth]{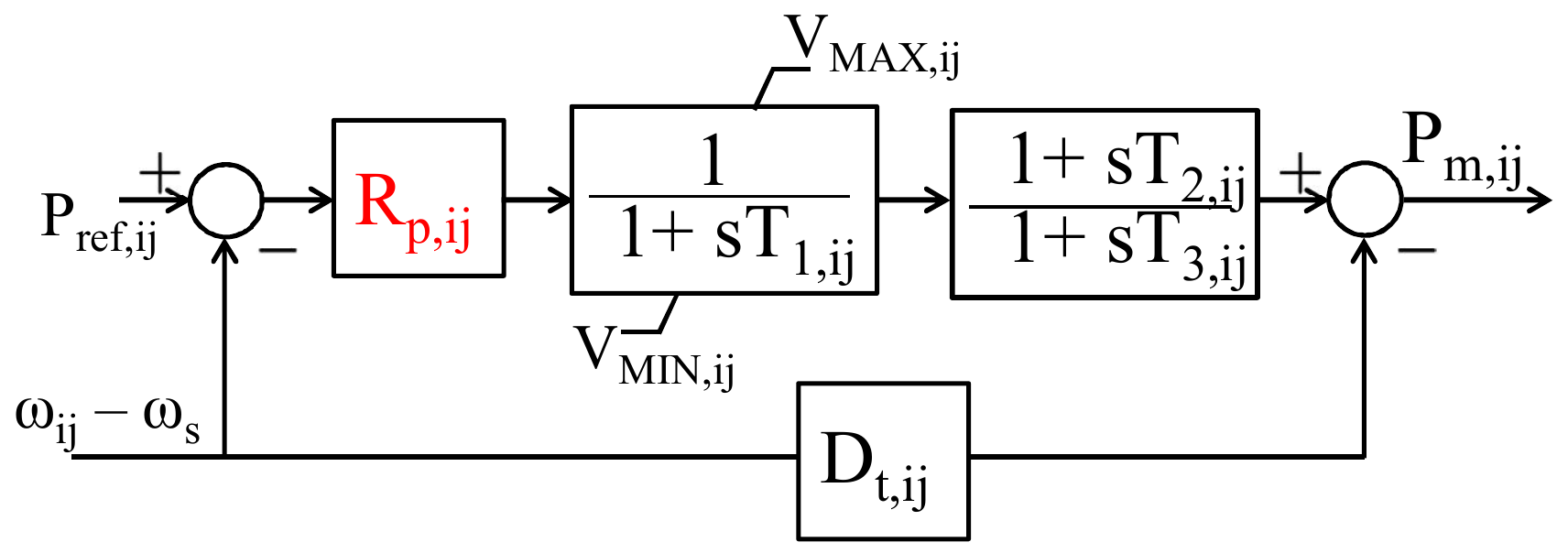}
	\caption{The standard TGOV1 turbine and governor model used for the power system model. The frequency droop gain of the governor $R_{p,ij}$ is an optimization variable.}
	\label{fig.TGOV1Model}
\end{figure}
\begin{figure}[tb]
	\centering
	\includegraphics[width=0.8\columnwidth]{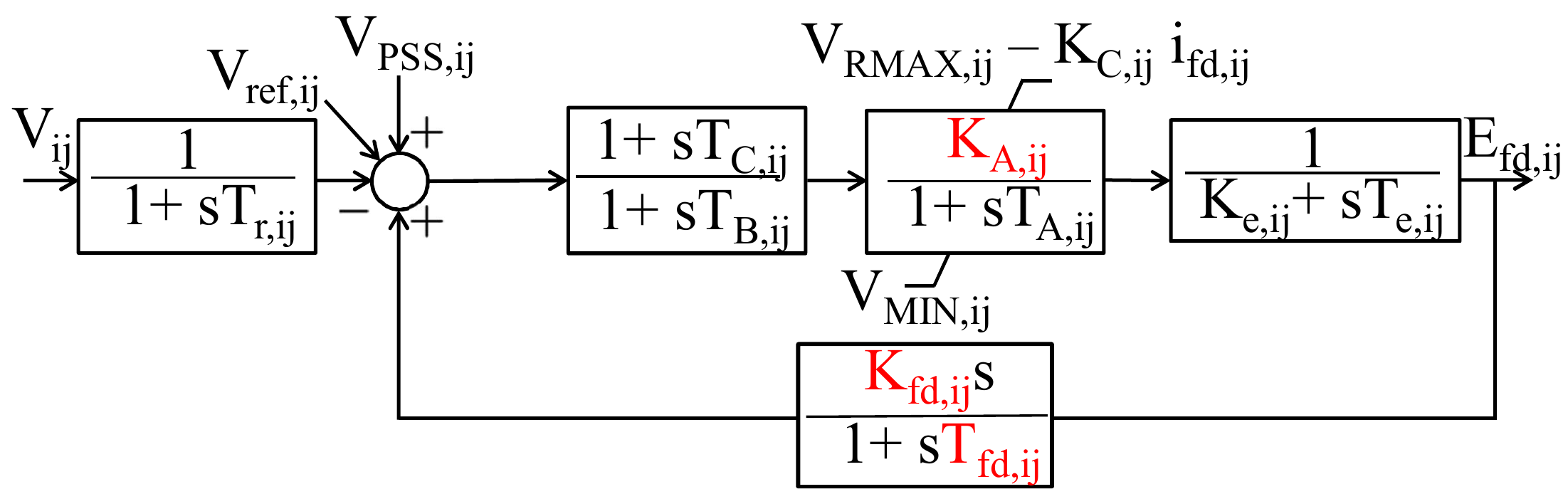}
	\caption{Dynamic model of AVR$_{ij}$~\cite{MathworksExciter}, where $T_{r,ij}$ is the transducer time constant, $T_{C,ij}$ and $T_{B,ij}$ are dynamic gain reduction time constants, $K_{A,ij}$ is the AVR gain, $T_{A,ij}$ is the AVR lag time constant, $K_{e,ij}$ and $T_{e,ij}$ are the exciter parameters, and $K_{fd,ij}$ and $T_{fd,ij}$ additional damping coefficients of the AVR. We consider $K_{A,ij}$, $K_{fd,ij}$, and $T_{fd,ij}$, marked red, as tunable parameters.}
	\label{fig.Exciter}
\end{figure}

\begin{figure}[tb]
	\centering
	\includegraphics[width=0.9\columnwidth]{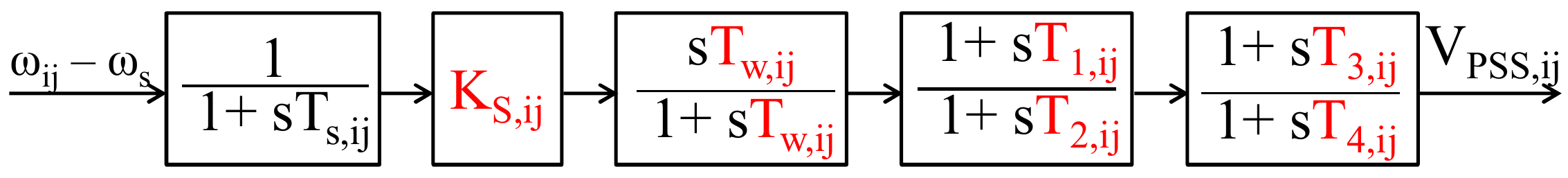}
	\caption{Dynamic model of the simple power system stabilizer (taken from~\cite{moeini2015open,kundur93a}), where $K_{S,ij}$ is the PSS gain, $T_{w,ij}$ is the washout time constant, $T_{1,ij}$-$T_{4,ij}$ are the lead-lag filters time constants, and $T_{s,ij}$ is the sensor time constant. All of the PSS parameters are tunable, except the sensor time constant.}
	\label{fig.PSS}
\end{figure}

Parameters of synchronous generators and of the power grid are provided in~\cite{singh2013ieee}, whereas Figures~\ref{fig.TGOV},~\ref{fig.TGOV1Model},~\ref{fig.Exciter}, and~\ref{fig.PSS} show the power plant controller models used for modeling of the IEEE 68 bus grid.
The governor and turbine models, shown in Figs.~\ref{fig.TGOV} and~\ref{fig.TGOV1Model} have one optimization parameter each, marked in red. It is the proportional gain of the governor. Thereby, half of the power plants have the model in Fig.~\ref{fig.TGOV1Model}, whereas other power plants have the model in Fig.~\ref{fig.TGOV}.
We optimize the gain $K_{A,ij}$ of the AVR$_{ij}$, shown red in Fig.~\ref{fig.Exciter}. We also optimize all parameters of PSS$_{ij}$ marked red in Fig.~\ref{fig.PSS}, except the physically-determined sensor time constant.
All presented controller models are standard IEEE models.
\end{document}